\documentclass{article}

\usepackage[utf8]{inputenc}
\usepackage{amsmath,amsthm,latexsym,amssymb,amsfonts,color}
\usepackage{bbm}

\usepackage[normalem]{ulem}
\usepackage{xcolor}

\usepackage{hyperref}
\usepackage{orcidlink}

\newcommand{\R}{{\mathbb R}}

\newcommand{\Lie}{{\mathcal L}}

\newcommand{\LAMS}{{\mathbb L}}

\newcommand{\ef}{\hat{e}}

\newcommand{\g}{\overline{g}}

\DeclareMathOperator{\const}{const}
\DeclareMathOperator{\rd}{d}

\newtheorem{df}{Definition}
\newtheorem{lm}{Lemma}
\newtheorem{cor}{Corollary}
\newtheorem{thm}{Theorem}
\newtheorem{prop}{Proposition}

\theoremstyle{remark}
\newtheorem{remark}{Remark}

\title{Geometric origin of intrinsic rigidity for extremal horizons}
\author{Wojciech Kamiński \orcidlink{0000-0003-3707-6087}, 
Adam Szereszewski \orcidlink{0000-0002-3242-8963}
}
\date{%
\small
  {\it   Faculty of Physics, University of Warsaw,\\
ul. Pasteura 5, 02-093 Warsaw, Poland}\\[2ex]%
    \today
}

\begin{document}

\maketitle

\begin{abstract}
Our work clarifies the origin of enhanced symmetry of near horizon geometries. We relate this property to an existence of specific foliation by non-expanding horizons. The existence of this foliation can be shown in any spacetime satisfying null convergence condition. Our proof is geometric and based on analyzing null hypersurfaces in near horizon geometry spacetimes.
\end{abstract}

\section{Introduction}

Extremal Killing horizons differ from non-extremal horizons as their intrinsic geometry is highly constrained by Einstein's field equations. For example, in four dimensional Einstein-Maxwell theory, the constraints force the horizon's induced metric and rotation $1$-form to match those of an extremal Kerr-Newman black hole \cite{Lewandowski:2002dq, Dunajski:2023IntrinsicRigidity, Colling:2024Rigidity}. This fact is based on a profound theorem that these intrinsic data need to be axisymmetric \cite{Dunajski:2023IntrinsicRigidity}.\footnote{Extended to electromagnetic field in \cite{Colling:2024Rigidity}.} This can be viewed as an intrinsic counterpart of the Hawking rigidity theorem \cite{cmp/1103857884, Chrusciel:2008js, Chrusciel:2023onh, Hollands:2006rj, Moncrief:2008mr, Hollands:2008wn} that however is independent of any global properties of spacetime. Recently this intrinsic rigidity was shown to hold in a great generality, in any dimension, under only assumption that the Ricci tensor satisfies \emph{null convergence condition} \cite{Colling:2025Symmetries}.
This condition can be stated as follows:  any double contraction of the Ricci tensor with arbitrary null vector is non-negative. In particular, it is satisfied if the spacetime is a solution to Einstein field equations with matter content satisfying null energy condition. The proof of this remarkable theorem \cite{Colling:2025Symmetries} is based on mysterious identity which is not only hard to verify but also has no obvious geometric interpretation. This identity was first developed in the context of vacuum equation in \cite{Dunajski:2023IntrinsicRigidity} and studied extensively in \cite{Colling:2024QuasiEinstein, kaminski2024:extremehorizonequation}. The main goal of the present paper is to derive intrinsic rigidity using more geometric approach based on theory of non-expanding horizons. Our method clarifies also the origin of the crucial identity and as result provides also some simplifications of the proof from \cite{Colling:2025Symmetries}.

The intrinsic rigidity result can be naturally formulated as a theorem regarding near horizon geometries (NHGs) \cite{Kunduri:2013gce} that satisfy null convergence condition. Near horizon geometries are a special class of Kundt metrics obtained from spacetimes containing extremal Killing horizons via a Geroch-type limiting procedure \cite{Kunduri:2013gce, Bardeen:1999px}. Because this limit preserves the  null convergence condition \cite{Colling:2025Symmetries}, the enhanced symmetry of near horizon geometry translates into symmetry property of the geometric data on the extremal Killing horizon.

One particular form of the NHG metric is 
\begin{align}\label{eq:metric}
g&=2\Gamma(x)\rd u \left(\rd \rho+\frac{1}{2}\rho^2 A(x)\rd u\right)+\nonumber\\&+h_{ij}(x)\left(\rd x^i+\rho K^i(x)\rd u\right)\left(\rd x^j+\rho K^j(x)\rd u\right),
\end{align}
where the spacetime is $M=\R\times S\times \R$ (coordinates $u,x^i,\rho$ respectively) and $S$ is a Riemannian manifold with metric $h$. We assume that $S$ is compact  and connected. Other elements are functions $A$ and $\Gamma$ on $S$ (with assumption $\Gamma>0$) and a vector field $K$ on $S$.  The hypersurface
\begin{equation}\label{eq:H}
{\mathcal H}:=\{(u,x^i,\rho)\in M\colon \rho=0\}    
\end{equation}
is an extremal Killing horizon\footnote{Killing horizon is a null hypersurface such that a given Killing vector field is a non-vanishing normal to the hypersurface. Some authors add additional conditions about connectedness \cite{Mars:2018hag, Chrusciel2012, Chrusciel:2020fql} or inextendibility \cite{Chrusciel2012, Chrusciel:2020fql}. Additionally, \cite{Chrusciel2012, Chrusciel:2020fql} exclude certain extremal horizons in the case when Killing vector field is null.} for a Killing vector field $U:=\frac{\partial}{\partial u}$ (called shift symmetry). There is another isometry by rescaling $(u,x^i,\rho)\rightarrow (e^su,x^i,e^{-s}\rho)$ (called boost symmetry) generated by a Killing vector field $B:=u\frac{\partial}{\partial u}-\rho\frac{\partial}{\partial \rho}$. Both symmetries preserves the horizon ${\mathcal H}$.

It turns out that we can assume some additional conditions on the coordinate system. Namely, we may assume that the divergence of the vector field $K$ vanishes
\begin{equation}\label{eq:div-K}
D_iK^i=0,
\end{equation}
where we denote covariant derivative on $S$ by $D_i$. 
This condition for the coordinate system was employed first in study of Aretakis instability \cite{Lucietti:2007uc}. We explain possibility of imposing \eqref{eq:div-K} in Section \ref{sec:principal}.

Intrinsic rigidity is a remarkable fact about possible form of near horizon geometry \cite{Dunajski:2023IntrinsicRigidity, Colling:2024QuasiEinstein, Colling:2025Symmetries}
\begin{thm}\label{thm:1}
Let $(M,g)$ be an NHG spacetime with $D_iK^i=0$ satisfying null convergence condition (that is $R_{\mu\nu}v^\mu v^\nu\geq 0$ for every null vector $v^\mu$) then 
\begin{enumerate}
\item function $A(x)$ is a constant,
\item $\Lie_Kh=0$, $\Lie_K\Gamma=0$.\footnote{The case of vanishing $K$ is also covered by this theorem. However, the second point trivializes.}
\end{enumerate}
\end{thm}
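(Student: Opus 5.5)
Write $\Gamma=e^{\phi}$ and work with the near horizon data on $S$. The idea is to use the null hypersurfaces of the NHG metric \eqref{eq:metric} transversal to the horizon, rather than verifying an algebraic identity directly. Fix a function $f$ on $S$ and consider a family of hypersurfaces $\{\rho=\rho_c(u,x)\}$, or more naturally $\{u=\text{const}\}$-type hypersurfaces adapted to the boost symmetry $B$. Since $B$ is a Killing field, I would look for null hypersurfaces invariant under $B$, namely of the form $\rho u = F(x)$ or $\rho=F(x)e^{-s}$ along boost orbits. Requiring such a hypersurface to be null gives a first-order PDE (eikonal type) for $F$. My plan is to show that a suitable $B$-invariant foliation by null hypersurfaces exists on which the expansion vanishes. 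These would be non-expanding horizons, which is what the abstract suggests.

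\textbf{Step 1: reduce to the eikonal equation.} Compute the Raychaudhuri equation along the null generators of a $B$-invariant null hypersurface $\mathcal N$. Because of the boost and shift symmetries, the generators accumulate asymptotically (in the affine parameter) to $\mathcal H$, or to a $B$-fixed set. Boundedness of the expansion then follows from the invariance, and the null convergence condition gives $\theta' \le -\theta^2/(n-2)$. A non-negative integrated Raychaudhuri argument, as in Hawking's area theorem, would force $\theta\equiv 0$ and $\sigma\equiv0$. It would also force $R_{\mu\nu}\ell^\mu\ell^\nu\equiv 0$ along $\mathcal N$. The key point is that an affinely complete, expansion-bounded generator with $\theta'\le -\theta^2/(n-2)$ must have $\theta\equiv0$.

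\textbf{Step 2: translate to data on $S$.} A vanishing shear and expansion on $\mathcal N$ means the Lie derivative of the induced degenerate metric along the generators vanishes. Transport $\mathcal N$ to $S$ via the quotient by generators. The induced metric on the cross-sections, in terms of $h$, $K$, $\Gamma$ and $F$, is then preserved by the flow of a vector field built from $K$ and $dF$. Using $D_iK^i=0$ to fix the normalization (divergence-free, so it preserves the volume form $\sqrt h$), I expect this to identify the generator field projected to $S$ with $K$ up to gradient terms that must vanish. The conclusion would be $\Lie_Kh=0$. The constraint relating $\Gamma$ to the normalization then gives $\Lie_K\Gamma=0$. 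Constancy of $A$ comes from the $\rho^2$ coefficient. Shear-free and expansion-free conditions at second order, combined with the vanishing of $R_{\mu\nu}\ell^\mu\ell^\nu$, should force $D_iA=0$. Alternatively, it should follow from the NHG equation $A=\tfrac12 |K|^2\Gamma^{-1}+\dots$ once $K$ is Killing.

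\textbf{Main obstacle.} The hard step is the existence of the $B$-invariant null foliation whose generators are complete and on which the expansion is bounded, i.e.\ solving the eikonal equation for $F$ globally on compact $S$. I would first try an ansatz $\rho = c\,e^{-\psi(x)}/u$ with $c$ a small parameter. Expanding in $c$, the leading-order condition becomes a linear elliptic equation $D_i(\Gamma^{-1}\dots D^i\psi)+\dots=0$ on $S$. The divergence-free gauge $D_iK^i=0$ is exactly what makes this solvable (Fredholm alternative with a one-dimensional kernel). A careful check is needed that the Raychaudhuri integration does not pick up a boundary term at $\mathcal H$. If a global foliation fails, I would fall back on working infinitesimally at $\mathcal H$: differentiate the expansion of the family in $c$ at $c=0$ and obtain the integral identity directly by integrating over $S$.
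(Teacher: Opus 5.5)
There is a genuine gap. The hypersurfaces your argument is built on do not exist in general, and the mechanism that actually produces rigidity is missing.

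For a $B$-invariant graph $\rho=F(x)/u$, the null condition \eqref{eq:f-u} reduces to the stationary first-order equation
\[
-F=-\tfrac{1}{2}AF^2+F\,K^iD_iF+\tfrac{1}{2}\Gamma\, D_iF\,D^iF \qquad\text{on } S .
\]
This equation is not elliptic. Its linearization at $F=0$ is simply $F=0$, so there are no small nontrivial solutions. Your ansatz $F=c\,e^{-\psi}$ is therefore inconsistent already at order $c$, and no Fredholm alternative is available. The elliptic operator $\LAMS$ enters through the expansion of such a hypersurface, not through its null condition.

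Worse, at a maximum or minimum of $F$ the equation forces $F=0$ or $F=2/A$. So for $A\equiv0$, a case the theorem must cover, the only solution is $F\equiv 0$. In general, by Proposition~\ref{prop:1}, any nontrivial $B$-invariant null graph with sign-definite expansion is $\mathcal H_0=\{\rho u=2/A\}$ with $A$ a nonzero constant. Hence a $B$-invariant foliation by expansion-free null hypersurfaces never exists.

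Step~1 fails independently. Along $\mathcal H_0$ the affinely parametrized generators reach $u\to\infty$ in finite affine parameter. Raychaudhuri plus NCC therefore yields only a one-sided sign of $\theta$, not $\theta\equiv 0$.

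Your fallback is empty as stated. The first variation of the expansion of $\mathcal H$ in direction $\delta f\,\partial_\rho$ is $\LAMS(\Gamma\,\delta f)$, whose integral over $S$ vanishes identically.

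The missing idea has two parts:
\begin{itemize}
\item The expansion of any null graph $\rho=f(u,x)$ with respect to $\bar\ell_o$ is exactly $\theta=\LAMS(\Gamma f)$ (Lemma~\ref{lm:Pij}).
\item The principal-eigenfunction lemma of Section~\ref{sec:principal} applies. Since $D_iK^i=-\LAMS(\Gamma)=0$, $\Gamma$ spans the kernel of $\LAMS$, and $\LAMS(\psi)$ of one sign forces $\psi\propto\Gamma$ and $\LAMS(\psi)=0$.
\end{itemize}

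The paper launches a null hypersurface from the cross-section $\{u=u_0,\ \rho=\rho_0\neq0\}$ (Lemma~\ref{lm:hypersurface}). There $\theta=\LAMS(\Gamma\rho_0)=0$, and Raychaudhuri with NCC gives $\theta\le 0$ to the future. Hence $\theta\equiv0$ and $f=\phi(u)$. Then $\phi'=-\tfrac12 A\phi^2$ forces $A=\const$ and $f=\rho_s$. Vanishing shear together with $P_{ij}=-\rho_s D_{(i}K_{j)}$ gives $\Lie_Kh=0$. Only a one-sided sign of $\theta$ on one side of a single cross-section is used, so no completeness of generators enters.

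Your Step~2 also cannot deliver $\Lie_K\Gamma=0$. Vanishing shear and expansion of $\mathcal H_s$ are equivalent only to $A=\const$ and $\Lie_Kh=0$ (Proposition~\ref{prop:2}). One needs the stronger consequence of NCC that $R_{\mu\nu}\ell^\nu\propto\ell_\mu$ (Lemma~\ref{lm:null}), i.e.\ $\bar\ell\lrcorner\Omega=0$. By Lemma~\ref{lm:omega-1} this gives $\Lie_KZ=0$ for $Z=\Gamma^{-1}(K_i+D_i\Gamma)\rd x^i$. Then $\Gamma$ is the principal eigenfunction of the $K$-invariant operator $\psi\mapsto -D^iD_i\psi+D^i(Z_i\psi)$, which yields $\Lie_K\Gamma=0$.

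Your appeal to an ``NHG equation'' for $A$ is not available either. Only NCC is assumed, not any field equations.
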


As a consequence of this fact $g$ is preserved by two more vector fields \cite{Colling:2025Symmetries}\footnote{This was first observed in the near horizon limit of extremal Kerr black hole in \cite{Bardeen:1999px}.}
\begin{equation}
Y:=\frac{1}{2}Au^2\frac{\partial}{\partial u}+(1-A\rho u)\frac{\partial}{\partial \rho}-uK^i\frac{\partial}{\partial x^i},\quad K^i\frac{\partial}{\partial x^i}.
\end{equation}
In the proof in \cite{Colling:2025Symmetries}, some vector field plays a crucial role:
\begin{equation}
\ell_C'=\Gamma\left(\frac{\partial}{\partial u}-\rho K^i\frac{\partial}{\partial x^i}-\frac{1}{2}A\rho^2\frac{\partial}{\partial \rho}\right).
\end{equation}
Our observation is that in the situation of Theorem \ref{thm:1}, this vector field is tangent to a foliation by non-expanding horizons. In fact, acting by a group locally generated by $Y$ we shift extremal Killing horizon ${\mathcal H}$ into a family of extremal Killing horizons foliating the spacetime.\footnote{By results of \cite{Booth:2024MOTSinstability} the Killing vector can be either tangent to extremal Killing horizon with a compact  and connected cross section or everywhere transversal to it. Vector $Y$ is transversal to ${\mathcal H}$. If $A\not=0$ then the vector field $Y$ is incomplete, which explains why the shifted hypersurfaces are not connected even if ${\mathcal H}$ is.} 

Our approach to proof of Theorem \ref{thm:1} is based on the solution to the following question. Every NHG spacetime is an example of a Kundt metric. It is foliated by one family of non-expanding horizons 
\begin{equation}\label{eq:H's-graph}
{\mathcal H}_s':=\{(u,x^i,\rho)\colon u=s=\const\},
\end{equation}
with the normal vector $k=\Gamma^{-1}\frac{\partial}{\partial \rho}$ \cite{Pawlowski:2003ys, Lewandowski:2016HigherDim}. One can ask if there exists another such foliation transversal to this one.

The first observation is the following:

\begin{prop}\label{prop:1}
Consider a NHG spacetime $M$ satisfying $D_iK^i=0$ and a null hypersurface given by a graph
\begin{equation}\label{eq:hypersurface}
\{(u,x^i,\rho)\in M\colon \rho=f(u,x^i),\ u\in I\},
\end{equation}
where $f\in C^\infty(I\times S)$ and $I\subset \R$ is an open interval. Suppose that $f$ is not identically zero and the expansion of this hypersurface is non-positive everywhere (or non-negative everywhere). Then
\begin{enumerate}
\item Expansion vanishes identically,
\item Function $A$ is constant.
\item $f(u,x^i)=\rho_s(u)$ for some $s\in \R$, where 
\begin{equation}\label{eq:rho_s}
\rho_s(u):=\frac{1}{s+\frac{1}{2}Au}.
\end{equation}
\end{enumerate}
\end{prop}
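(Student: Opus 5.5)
The plan is to reduce the expansion to a divergence on $S$, using that every cross-section $\{u=\const\}$ of the graph is isometric to $(S,h)$. Compactness then forces the expansion to vanish, and the elliptic equation $\theta=0$ together with the null condition fixes $f$.

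\emph{Step 1: the null condition.} I would write the metric \eqref{eq:metric} in the coframe $\theta^u=\rd u$, $\theta^\rho=\rd\rho+\frac12\rho^2A\,\rd u$, $e^i=\rd x^i+\rho K^i\rd u$. In this coframe $g=2\Gamma\theta^u\theta^\rho+h_{ij}e^ie^j$, and the dual frame is $E_\rho=\partial_\rho$, $E_i=\partial_i$, $E_u=\partial_u-\frac12\rho^2A\partial_\rho-\rho K^i\partial_i$. The conormal $n=\rd\rho-\partial_uf\,\rd u-D_if\,\rd x^i$ expands as
\[
n=\theta^\rho+\bigl(-\tfrac12Af^2-\partial_uf+fK^iD_if\bigr)\theta^u-D_if\,e^i
\]
on the hypersurface. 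The condition $g^{-1}(n,n)=0$ is then the Hamilton--Jacobi type equation
\begin{equation*}
\partial_u f=-\tfrac12 A f^2+fK^iD_if+\tfrac{\Gamma}{2}|Df|_h^2 .
\end{equation*}
The generator $g^{-1}(n)$ satisfies $g^{-1}(n)(u)=1/\Gamma>0$. Hence $\ell:=\Gamma g^{-1}(n)$ is a future generator with $\ell(u)=1$, and its $x$-component is $X^i=-fK^i-\Gamma D^if$.

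\emph{Step 2: vanishing of the expansion.} The slices $S_u:=\{u=\const\}$ of the hypersurface are cross-sections parametrized by $x\in S$. On them $\rd u=0$, so the induced metric is exactly $h_{ij}\rd x^i\rd x^j$; every cross-section has area $\mathrm{vol}_h(S)$, independent of $u$. The flow of $\ell$ maps $S_u$ to $S_{u+t}$, and in the $x$-coordinates it acts by the flow of $X$. Therefore the expansion of $\ell$ is the rate of change of the $h$-volume form under $X$:
\[
\theta_\ell=D_iX^i=-K^iD_if-D_i(\Gamma D^if),
\]
where I used $D_iK^i=0$. Integrating over $S$ gives $\int_S\theta_\ell\,\rd\mathrm{vol}_h=\frac{\rd}{\rd u}\mathrm{vol}_h(S)=0$, which is also immediate from the divergence theorem. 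Since $\theta_\ell$ has a fixed sign, $\theta_\ell\equiv0$, which is point 1.

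\emph{Step 3: form of $f$ and constancy of $A$.} For each fixed $u$, the equation $D_i(\Gamma D^if)+K^iD_if=0$ is a second-order elliptic equation on compact connected $S$ with no zero-order term. By the strong maximum principle, $f(u,\cdot)$ is constant on $S$, so $f=f(u)$. The null condition then reduces to $f'(u)=-\frac12A(x)f(u)^2$ for all $x$.
\begin{itemize}
\item If $f(u_0)=0$ for some $u_0\in I$, uniqueness for this ODE (for each fixed $x$) gives $f\equiv0$, which is excluded.
\item Hence $f$ never vanishes. Then $A(x)=-2f'/f^2$ is independent of $x$, which is point 2.
\item Integrating gives $1/f=s+\frac12Au$ for some constant $s$, i.e.\ $f=\rho_s$, which is point 3.
\end{itemize}

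The step I expect to need the most care is identifying the expansion with $D_iX^i$ in Step 2. One must check that the motion of the sections in the $\rho$ and $u$ directions does not contribute, because the induced metric on each $S_u$ is $h$ regardless of $\rho$. One must also check that the $h$-volume form is the correct one to differentiate, i.e.\ that the null direction adds no normalization factor. I would verify this directly: the volume form of the pulled-back metric along the flow is $\det h$ evaluated at the transported $x$, with $\ell$ normalized by $\ell(u)=1$.
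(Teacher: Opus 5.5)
Your proof is correct and follows the paper's argument: the expansion of the generator normalized by $\ell(u)=1$ is $D_iX^i=\mathbb{L}(\Gamma f)$, a divergence on $S$, so a fixed sign forces it to vanish; $f$ is then constant on each slice, and the ODE $f'=-\frac12 A f^2$ gives $A=\const$ and $f=\rho_s$. The only differences are minor and favourable to you: you get slice-constancy from the strong maximum principle (the operator acting on $f$ has no zeroth-order term precisely because of the gauge $D_iK^i=0$) rather than from the principal-eigenvector Lemma~\ref{lm: principal}, which is more elementary, and you explicitly exclude zeros of $f$ via ODE uniqueness, a step the paper leaves implicit.
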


Let us remark that sign of expansion depends on the choice of time orientation on the hypersurface. Here we have two choices, either $u$ growing with time or decreasing with time. Both choices are covered by this proposition. In order to prove this result we will compute directly the expansion. It is given by $\theta={\mathbb L}(\Gamma f)$, where ${\mathbb L}$ is an important stability operator (second order operator on $S$) governing variation of expansion of the horizon ${\mathcal H}$. The appearence of this operator is not surprising. Due to high symmetry of NHG spacetime, the formula for expansion is linear in $f$, so the infinitesimal variation gives an exact result. Function $\Gamma$ is the principal eigenfunction of this operator and the principal eigenvalue is zero. From basic properties of this type of operators if for some constant $u$, expansion is non-negative (non-positive) then it vanishes and $f|_{u=\const}=\const$. Using this property we can prove Proposition \ref{prop:1}. The proof will be given in Section \ref{sec:hyper}.

This fact allows us to focus our attention on ${\mathcal H}_s$ hypersurfaces given by the graphs \eqref{eq:hypersurface} with $f(u,x^i)=\rho_s(u)$ (see \eqref{eq:rho_s}). We can express them as level sets,
\begin{equation}\label{eq:Hs-graph}
{\mathcal H}_s:=\{(u,x^i,\rho)\colon \chi(u,\rho)=s\},\quad s\in \R,
\end{equation}
where we introduced a function
\begin{equation}
\chi=\frac{1}{\rho}-\frac{1}{2}Au.
\end{equation}
Our convention is ${\mathcal H}_{\pm \infty}={\mathcal H}$. 
Furthermore, we introduce a vector field $\ell_C$, normal to the leaves of the foliation\footnote{It is a natural (distinguished)  normal vector field to the leaves of a foliation considered in \cite{Pawlowski:2003ys, Lewandowski:2016HigherDim}.}
\begin{equation}
\ell_C\lrcorner g=\rd \chi.
\end{equation}
We can check that this vector is proportional to $\ell_C'$ if $A=\const$. For $A$ constant and non-zero, every hypersurface ${\mathcal H}_s$, $s\in \R$ has two connected components ($u>-\frac{2s}{A}$ and $u<-\frac{2s}{A}$).

Concerning ${\mathcal H}_s$ hypersurfaces, we have the following geometric classification:

\begin{prop}\label{prop:2}
Let $M$ be NHG spacetime with $D_iK^i=0$. A segment $u_-<u<u_+$ of a hypersurface ${\mathcal H}_s$ for $s\in \R$ is
\begin{enumerate}
\item\label{prop2:1}  a null hypersurface with vanishing expansion if and only if function $A(x)$ is constant,
\item\label{prop2:2}  a totally geodesic null hypersurface
\footnote{Totally geodesic means that the second fundamental form vanishes.}
if and only if
\begin{equation}
\Lie_Kh=0,\quad A=\const,
\end{equation}
\item\label{prop2:3} a non-expanding horizon \footnote{
The definition of a non-expanding horizon varies across the literature.
Here we follow \cite{Ashtekar:2021gkq} in contrast to \cite{Ashtekar:2004cn}. The non-expanding horizon is a totally geodesic null hypersurface such that the Einstein tensor contracted with the normal vector is proportional to this vector. The definition of \cite{Ashtekar:2004cn} adds an additional minor condition that proportionality constant is non-positive. } if and only if
\begin{equation}
\Lie_Kh=0,\quad A=\const,\quad \Lie_K\Gamma=0.
\end{equation}
\end{enumerate}
Moreover, if the last condition holds then  ${\mathcal H}_s$   is an extremal Killing horizon for some Killing vector field.
\end{prop}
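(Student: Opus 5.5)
The plan is to compute everything directly in the adapted coframe
\[
\theta^u=\rd u,\qquad \theta^\rho=\rd\rho+\tfrac12\rho^2A\,\rd u,\qquad e^i=\rd x^i+\rho K^i\rd u,
\]
in which $g=2\Gamma\,\theta^u\theta^\rho+h_{ij}e^ie^j$. The dual frame is
\[
E_u=\partial_u-\tfrac12\rho^2A\,\partial_\rho-\rho K^i\partial_i,\qquad E_\rho=\partial_\rho,\qquad E_i=\partial_i,
\]
so that $g^{-1}=\Gamma^{-1}(E_u\otimes E_\rho+E_\rho\otimes E_u)+h^{ij}E_i\otimes E_j$. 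For a segment of ${\mathcal H}_s$, $s\in\R$, we write it as the graph $\rho=\rho_s(u)$ of \eqref{eq:rho_s}. A priori $A$ depends on $x$, so here $\rho_s$ is a function on $I\times S$.

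\textbf{Item \ref{prop2:1}.} For the ``only if'' direction, a null segment of ${\mathcal H}_s$ with vanishing expansion is a graph of the form \eqref{eq:hypersurface}, with $f=\rho_s$ smooth and nowhere zero. Its expansion is trivially of one sign, so Proposition \ref{prop:1} applies directly and yields $A=\const$.

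For the ``if'' direction, take $A$ constant. Then $\rd\chi=-\rho^{-2}\theta^\rho$, so $g^{-1}(\rd\chi,\rd\chi)=0$ and ${\mathcal H}_s$ is null. Its null generator is $\ell_C\propto E_u$, which is tangent since $E_u(\chi)=0$. In coordinates $(u,x^i)$ on ${\mathcal H}_s$ we have $\theta^\rho|_{{\mathcal H}_s}=0$. Hence the degenerate induced metric is
\[
q=h_{ij}(x)\bigl(\rd x^i+\rho_s(u)K^i\rd u\bigr)\bigl(\rd x^j+\rho_s(u)K^j\rd u\bigr),
\]
and the generator is $n=\partial_u-\rho_s(u)K^i\partial_i$. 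The second fundamental form is $\frac12\Lie_n q$ restricted to the transverse directions. Since $h$ is $u$-independent and $\Lie_n$ of the forms $e^i$ only produces terms proportional to $\rd u$, this should reduce to
\[
-\tfrac12\rho_s(u)\,(\Lie_Kh)_{ij}.
\]
Its trace gives the expansion $-\rho_s D_iK^i=0$ by \eqref{eq:div-K}.

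\textbf{Item \ref{prop2:2}.} By item \ref{prop2:1} we may take $A$ constant, since totally geodesic implies null with vanishing expansion. The formula above then shows that the segment is totally geodesic exactly when $\Lie_Kh=0$; here we use $\rho_s\neq0$ on the segment.

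\textbf{Item \ref{prop2:3}.} Assume $A=\const$ and $\Lie_Kh=0$. We then compute $R_{\mu\nu}n^\mu X^\nu$ for $X$ tangent to ${\mathcal H}_s$ and transverse to $n$; the Einstein-tensor condition is equivalent to its vanishing. For a totally geodesic null hypersurface this equals the $n$-derivative of the rotation $1$-form, $\omega(X)=g(\nabla_X n,\ell)$ with $\ell$ transverse, so I would compute $\omega$ along ${\mathcal H}_s$. I expect it to have the form of the horizon rotation form of ${\mathcal H}$ plus a $u$-dependent correction involving $\rho_s\,\rd\log\Gamma$ and $\rho_s K$ lowered with $h$. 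Differentiating along $n$ should leave a term proportional to $\rho_s^2\,D_i(\Lie_K\log\Gamma)$, or equivalently $\Lie_K\Gamma$ up to already-vanishing terms. I expect this computation to be the main obstacle. If it proves messy, I would transport the known non-expanding horizon data of ${\mathcal H}$ using the isometries $U,B$, which preserve each $\rho_s$ family up to relabelling $s$.

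\textbf{Killing horizon.} When all three conditions hold, $Y$ and $K$ are Killing. I would look for a combination $\xi=Y+aB+bU$, with constants $a,b$ chosen in terms of $s$ and $A$, that is tangent to ${\mathcal H}_s$ and null there. Concretely, we impose $\xi(\chi)=0$ and $g(\xi,\xi)|_{{\mathcal H}_s}=0$. Then $\xi$ is a normal to ${\mathcal H}_s$. Extremality follows by checking $\nabla_\xi\xi=0$ on ${\mathcal H}_s$, or by observing that ${\mathcal H}_s$ is the image of ${\mathcal H}$ under the local flow of $Y$.
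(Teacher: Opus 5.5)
Your items~\ref{prop2:1} and~\ref{prop2:2} are correct and essentially the paper's argument. The genuine gap is the ``only if'' half of item~\ref{prop2:3}. This is the one step of the proposition that is not a local computation.

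\textbf{The obstruction is not what you predict.} The computation you postpone is the paper's Lemma~\ref{lm:omega-1}: $\bar{\ell}_o\lrcorner\Omega=F_i\bar{e}^i$ with $F=\frac12\rho_s\Lie_KZ$ and $Z=\Gamma^{-1}(K_i+D_i\Gamma)\rd x^i$. With $\Lie_Kh=0$ and $\phi:=\Lie_K\ln\Gamma$ this equals $\frac12\rho_s\bigl(\rd\phi-\phi\,\Gamma^{-1}K_i\rd x^i\bigr)$. So the obstruction is not $D_i\phi$, and the extra term is not ``already vanishing'': it is proportional to $\phi$ itself.

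\textbf{The condition is not local.} The equation $\rd\phi=\phi\,\Gamma^{-1}K_i\rd x^i$ does not force $\phi=0$ locally. On flat $\R^2$ with $K=\partial_x$, the choice $\Gamma=1+e^x$ solves it with $\phi=e^x/(1+e^x)\neq0$. Hence a global argument using compactness of $S$ and $D_iK^i=0$ is indispensable, and your proposal has none. Even your predicted condition $D_i\phi=0$ would need one to exclude $\phi=\const\neq0$.

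\textbf{How the paper closes it.} The operator $\LAMS'(\psi)=-D^iD_i\psi+D^i(Z_i\psi)$ satisfies $\LAMS'(\Gamma)=D_iK^i=0$, so $\Gamma$ is its principal eigenfunction. Since $\Lie_Kh=0=\Lie_KZ$, $\LAMS'$ commutes with $\Lie_K$, and Lemma~\ref{lm: principal} gives $\Lie_K\Gamma=0$. An elementary alternative also works: $\phi$ obeys a linear ODE along every curve, so on connected $S$ it vanishes identically or nowhere. Meanwhile $\int_S\Gamma\phi=\int_S K(\Gamma)=-\int_S\Gamma D_iK^i=0$ (in the $h$-volume), which rules out a fixed sign.

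\textbf{Your fallback cannot substitute.} $U$ and $B$ both preserve ${\mathcal H}$ and only permute the leaves ${\mathcal H}_s$, $s\in\R$, among themselves, so they never carry data from ${\mathcal H}$ to ${\mathcal H}_s$. The isometry that does is the flow of $Y$, and $Y$ is Killing only once $\Lie_K\Gamma=0$ is known.

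\textbf{The Killing-horizon claim is also set up incorrectly.} Your ansatz $\xi=Y+aB+bU$ contains no $K^i\partial_i$ term. If $K\neq0$, tangency and nullity force $\xi\propto e_+$, so $\ef^i(\xi)=K^i(\rho\,\xi^u-u)=0$. This gives $\xi^u=u(s+\frac12Au)$, hence $a=s$ and $b=0$. But on ${\mathcal H}_s$ one has $Y+sB=\frac{u}{s+\frac12Au}V_{(s)}$. This field vanishes on the cross-section $u=0$ and has surface gravity $s$, so the check $\nabla_\xi\xi=0$ fails for $s\neq0$.

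The paper instead uses $V_{(s)}=\frac12AY+AsB+s^2U-sK^i\partial_i$, which equals $\rho^{-2}e_+=-\Gamma\ell_C$ on ${\mathcal H}_s$. It is extremal because $\ell_C$ is a null gradient, so $\nabla_{\ell_C}\ell_C=0$, and $\ell_C(\Gamma)=0$ by $\Lie_K\Gamma=0$. Your alternative via the local flow of $Y$ is viable in principle. However, the Killing field it produces (the transport of $U$) necessarily contains $K$, since $[U,Y]=AB-K^i\partial_i$.
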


Point \ref{prop2:1} is a complementary result to Proposition \ref{prop:1} which shows that $A=\const$ is a necessary condition for ${\mathcal H}_s$ to be hypersurface with vanishing expansion. Proposition \ref{prop:2} is a quite straightforward computation except of the last part where some more involved argument is needed. To prove the last point, we will show that if ${\mathcal H}_s$ is a non-expanding horizon then $\Gamma$ is a principal eigenvector of  certain second order operator on $S$ that is invariant under action of the vector field $K$.  This is sufficient to show invariance of $\Gamma$.  The proof of Proposition \ref{prop:2} will be given in Section \ref{sec:hyper}.

Let us remark that we do not assume neither any energy condition nor any Einstein field equations in this proposition.  The Theorem \ref{thm:1} will follow if we can show that ${\mathcal H}_s$ are non-expanding horizons under assumption of null convergence condition:

\begin{prop}\label{prop:3}
If null convergence condition holds then ${\mathcal H}_s$ are non-expanding horizons for $s\in \R$.
\end{prop}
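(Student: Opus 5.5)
We want to show that, under the null convergence condition, each ${\mathcal H}_s$ for $s\in\R$ is a non-expanding horizon. By Proposition \ref{prop:2} this is the same as proving $A=\const$, $\Lie_Kh=0$ and $\Lie_K\Gamma=0$.

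\textbf{Step 1: $A$ is constant.} We use the Raychaudhuri equation along the null generators of a null hypersurface that starts at ${\mathcal H}$ and moves toward positive $\rho$. Concretely, take the graph $\rho=f(u,x^i)$ with initial data at some $u=u_0$ given by a small constant, $f|_{u=u_0}=\rho_s(u_0)$. We then evolve it by the null geodesic flow as the null hypersurface through that cross-section. The expansion formula $\theta={\mathbb L}(\Gamma f)$ is linear in $f$, and $\Gamma$ is the principal eigenfunction of ${\mathbb L}$ with eigenvalue zero. Hence the initial cross-section has $\theta=0$.

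The Raychaudhuri equation gives $\dot\theta=-\frac{\theta^2}{n-2}-|\sigma|^2-R_{\mu\nu}\ell^\mu\ell^\nu\le -\frac{\theta^2}{n-2}$, using the null convergence condition. So $\theta\le 0$ along the whole hypersurface for $u\ge u_0$, as long as it stays a graph. This is the standard focusing argument. Proposition \ref{prop:1} then gives that $\theta\equiv0$, that $A=\const$, and that $f=\rho_s$.

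The main obstacle is ensuring that the hypersurface remains a smooth graph over a nontrivial interval $I$. I would handle this by choosing $\rho_s(u_0)$ small. By continuity in the initial data, for a short $u$-interval the surface is a graph close to ${\mathcal H}$, and nearness to ${\mathcal H}$ gives a graph on an open $I$. Proposition \ref{prop:1} only needs some open interval, so this suffices.

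\textbf{Step 2: the shear and the Ricci term vanish.} Since $\theta\equiv 0$ on these hypersurfaces, the Raychaudhuri equation forces $|\sigma|^2+R_{\mu\nu}\ell^\mu\ell^\nu=0$. Both terms are non-negative, so both vanish. Therefore the hypersurfaces ${\mathcal H}_s$, at least those reached in Step 1, are totally geodesic. By Proposition \ref{prop:2}(\ref{prop2:2}) this gives $\Lie_Kh=0$.

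To cover all $s$, I would first get the result for one $s$. The conditions $A=\const$ and $\Lie_Kh=0$ are independent of $s$. Proposition \ref{prop:2} then says they hold for every segment of every ${\mathcal H}_s$.

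\textbf{Step 3: $\Lie_K\Gamma=0$.} For the non-expanding horizon condition we need the Einstein tensor contracted with the normal $\ell$ to be proportional to $\ell$. On a totally geodesic null hypersurface the Codazzi-type constraint gives $R_{\mu\nu}\ell^\mu e^\nu_A$ in terms of derivatives of the second fundamental form. With vanishing second fundamental form, this reduces to a condition that holds once $R_{\mu\nu}\ell^\mu\ell^\nu=0$ and $\theta=\sigma=0$.

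The delicate point is the remaining component $R_{\mu\nu}\ell^\mu X^\nu$ for tangential spacelike $X$. For a totally geodesic null hypersurface this is expressed through the rotation form and vanishes identically, by the Gauss–Codazzi equation $R_{\mu\nu}\ell^\mu X^\nu=(\text{tangential derivatives of }\theta,\sigma)$. So $G_{\mu\nu}\ell^\nu\propto\ell_\mu$ follows. Proposition \ref{prop:2}(\ref{prop2:3}) then yields $\Lie_K\Gamma=0$ and the non-expanding horizon property.

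I expect the main work in Step 3 to be verifying this Codazzi identity in the present coordinates. If it fails to close, I would instead compute $R(\ell_C,\cdot)$ directly from \eqref{eq:metric} with $A$ constant and $\Lie_Kh=0$, and read off that the remaining condition is exactly $\Lie_K\Gamma=0$.
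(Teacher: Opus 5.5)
Your Steps 1 and 2 follow the paper's proof. You take a null graph with constant initial data $\rho_0\neq0$ at $u_0$. Write $\rho_0$ rather than $\rho_s(u_0)$, which depends on $x$ before $A$ is known to be constant. Also, no smallness is needed: $\{u=u_0\}$ is non-characteristic for \eqref{eq:f-u}, so Lemma \ref{lm:hypersurface} gives local existence directly. The initial expansion vanishes because $\LAMS(\Gamma\rho_0)=0$. Raychaudhuri and NCC then give non-positive expansion to the future; the $\kappa\theta$ term you dropped does not affect this sign argument. Finally you apply Proposition \ref{prop:1}, use vanishing shear, and invoke Proposition \ref{prop:2}, point \ref{prop2:2}.

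The gap is in Step 3. It is not true that $R_{\mu\nu}\ell^\mu X^\nu$ vanishes identically on a totally geodesic null hypersurface. By the null Codazzi relation \eqref{eq:non-expanding-cond}, its pull-back is ${\mathfrak f}^{(\bar{\ell})}=\bar{\ell}\lrcorner\Omega=\Lie_{\bar{\ell}}\omega^{(\bar{\ell})}-\rd\kappa^{(\bar{\ell})}$. This involves the rotation form, not only $\theta$ and $\sigma$. Its vanishing is exactly what separates point \ref{prop2:2} from point \ref{prop2:3} of Proposition \ref{prop:2}. On ${\mathcal H}_s$ it equals $\frac{1}{2}\rho_s(u)\Lie_KZ$ (Lemma \ref{lm:omega-1}). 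This is nonzero for any NHG with $A=\const$ and $\Lie_Kh=0$ but $\Lie_K\Gamma\neq0$; such metrics exist and merely violate NCC. Your fallback of computing $R(\ell_C,\cdot)$ directly would only reproduce the identity $\Lie_KZ=2\beta$ of Proposition \ref{prop:Ray}. That identifies the remaining condition but does not prove it holds.

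The missing idea is a second, pointwise use of NCC on the mixed components. Since $R_{\mu\nu}\ell^\mu\ell^\nu=0$ (Step 2), $\ell$ is a minimum of $v\mapsto R_{\mu\nu}v^\mu v^\nu$ on the null cone. A Lagrange multiplier argument (Lemma \ref{lm:null}) then gives $R_{\mu\nu}\ell^\nu=c\,\ell_\mu$. Concretely, take $X\perp\ell$ and a null $n$ with $g(\ell,n)=1$ and $g(n,X)=0$. The vectors $v_t=\ell+tX-\frac{t^2}{2}g(X,X)\,n$ are null, and $R(v_t,v_t)=2t\,R(\ell,X)+O(t^2)\geq0$ for both signs of $t$, which forces $R(\ell,X)=0$. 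With this (Proposition \ref{prop:non-expanding}) the segment is a non-expanding horizon. Proposition \ref{prop:2}, point \ref{prop2:3}, then gives $\Lie_K\Gamma=0$ and hence the claim for every $s$.
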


The proof of Proposition \ref{prop:3} is remarkable simple. Let us consider a null hypersurface ${\mathcal N}$ of the form \eqref{eq:hypersurface}. We assume that its cross section at $u_0$ is a surface of constant $\rho$
\begin{equation}
{\mathcal C}=\{(u,x^i,\rho)\colon u=u_0,\ \rho=\rho_0\}.
\end{equation}
We will show that such hypersurface exists for arbitrary $u_0,\rho_0$ and moreover, the expansion vanishes on ${\mathcal C}$.  The expansion on the part of ${\mathcal N}$ in future of ${\mathcal C}$ is non-positive, whereas expansion on the part of ${\mathcal N}$ in past of ${\mathcal C}$ is non-negative. It follows from Raychadhuri equation if null convergence condition is satisfied. By Proposition \ref{prop:1} the expansion vanishes everywhere and additionally, the hypersurface coincides with a segment of ${\mathcal H}_s$ for some $s$. Now under assumption of null convergence condition, every null hypersurface with vanishing expansion is in fact a non-expanding horizon. As result ${\mathcal H}_s$ is non-expanding horizon and Proposition \ref{prop:2} shows that $A$ is constant, $\Lie_Kh=0$ and $\Lie_K\Gamma=0$. The details of the proof will be given in Section \ref{sec:null-energy}.

Finally, it is possible to obtained an identity equivalent to the formula from \cite{Dunajski:2023IntrinsicRigidity, Colling:2024QuasiEinstein, Colling:2025Symmetries}. This identity is obtained from analyzing the Raychadhuri equation in a specific situation. It allows to apply argument from \cite{Colling:2025Symmetries}, but it has a small advantage over the original formulation that constancy of $A$ follows directly from this identity without invoking a separate argument. In order to prove $\Lie_K\Gamma=0$, we deduce another interesting identity based on the proof of Proposition \ref{prop:2} point~\ref{prop2:3}.

Let us now comment on symmetry inheritence of the matter fields. Suppose that NHG spacetime is a solution to Einstein equations with some matter content satisfying null energy condition. The question is if the matter fields themself inherit symmetries of the spacetime. Our goal is to show how using geometric framework of non-expanding horizons simplifies proof of symmetry inheritence. We will restrict our attention to the Maxwell field:

\begin{prop}\label{prop:enhancement}
Let $M$ be a solution of Einstein-Maxwell theory with a cosmological constant. If $M$ is NHG spacetime then Maxwell field is invariant under symmetries $U,B,Y$ and $K^i\frac{\partial}{\partial x^i}$ of NHG.
\end{prop}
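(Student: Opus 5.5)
The plan is to use the foliation of $M$ by the hypersurfaces ${\mathcal H}_s$, together with the old foliation ${\mathcal H}_s'$, and the fact that on a non-expanding horizon the Einstein tensor contracted with the null normal is proportional to that normal. In Einstein--Maxwell theory with cosmological constant the null convergence condition holds, because $R_{\mu\nu}v^\mu v^\nu = 2F_{\mu\alpha}F_\nu{}^\alpha v^\mu v^\nu\ge 0$ for null $v$. Theorem \ref{thm:1} and Propositions \ref{prop:2}, \ref{prop:3} then apply: $A$ is constant, $\Lie_Kh=0$, $\Lie_K\Gamma=0$, and every ${\mathcal H}_s$ (including ${\mathcal H}$) and every ${\mathcal H}_s'$ is a non-expanding horizon.

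The first step is to extract the pointwise constraint on $F$. On a non-expanding horizon with normal $\ell$, $G_{\mu\nu}\ell^\nu\propto \ell_\mu$, so $T_{\mu\nu}\ell^\nu\propto\ell_\mu$. Contracting with $\ell^\mu$ gives $T_{\mu\nu}\ell^\mu\ell^\nu=|\ell\lrcorner F|^2=0$. Since $\ell\lrcorner F$ is orthogonal to $\ell$, this forces $\ell\lrcorner F=\alpha\,\ell_\flat$. Equivalently, the pullback of $F$ to the horizon has the form $\ell_\flat\wedge(\cdot)$ plus a purely "spatial" part, with no component pairing $\ell$ with transverse spatial directions. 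I will apply this with $\ell=k=\Gamma^{-1}\partial_\rho$ on every ${\mathcal H}_u'$ and with $\ell=\ell_C$ on every ${\mathcal H}_s$; both families foliate (an open dense part of) $M$. In the null frame $(k,\ell_C,e_A)$ adapted to both foliations, the two conditions kill the components $F(k,e_A)$ and $F(\ell_C,e_A)$. Hence $F=E\,k_\flat\wedge\ell_{C\flat}+\tfrac12 F_{AB}e^A\wedge e^B$ with only a "longitudinal" scalar $E$ and a "transverse" 2-form.

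The second step uses Maxwell's equations $\rd F=0$, $\rd\star F=0$ to show that $E$ and $F_{AB}$, expressed in the coordinates $(u,x,\rho)$, depend only on $x$ in the right way, i.e.\ that the Lie derivatives of $F$ along $U$ and $B$ vanish. The plan is to first establish boost and shift invariance, and then invariance under $Y$. For the latter I will use that $Y$ maps ${\mathcal H}$ to the leaves ${\mathcal H}_s$ and preserves the whole structure, so $\Lie_YF$ is again of the restricted form above. For $K$, I will use that the transverse part is a closed form on the cross-sections whose $u,\rho$-dependence is fixed. Alternatively, note that $[Y,U]$ and $[Y,B]$ generate $K$ together with $U,B,Y$, so invariance under $K$ follows once $U,B,Y$ are done.

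A cleaner route for the first invariances is to observe that $\Lie_XF$, for any Killing $X$, again solves Maxwell's equations, and that its stress tensor contributes to linearised Einstein equations with vanishing metric variation. I would try to show that the freedom left after the pointwise constraint is exactly a constant rescaling under $B$ (zero, by compactness of $S$ and the boost weight) and zero under $U$. Concretely, I will compute $\rd F=0$ in components: $\partial_\rho$ and $\partial_u$ of $F_{ij}$ are expressed through $x$-derivatives of $E$ and $F_{AB}$. Combined with the corresponding $\rd\star F=0$ equations, this should give a transport system along $\partial_\rho$ whose only solutions regular on all $\rho\in\R$ (using also the ${\mathcal H}_s$ constraints, which tie $\rho$ and $u$ dependence via $\chi$) are $u,\rho$-independent in the NHG-adapted frame.

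The expected main obstacle is this last step: turning the algebraic constraints from both foliations plus Maxwell's equations into genuine invariance, rather than mere restricted form. In particular, I need to rule out a residual $u$- or $\rho$-linear growth in the spatial 2-form $F_{ij}$, and a non-constant $\Lie_KE$. For the latter I expect to use the Einstein equation itself: the stress tensor's $K$-invariance, inherited from $\Lie_K g=0$, gives $\Lie_K(E^2+|F_{AB}|^2)$-type relations, and together with an integration over compact $S$ this should force $\Lie_K F=0$.
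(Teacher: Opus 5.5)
\textbf{There is a genuine gap.} Your first step is correct and matches the first part of the paper's Lemma~\ref{lm:Maxwell}. On each non-expanding horizon, $T_{\mu\nu}\ell^\mu\ell^\nu\hat{=}0$ forces $F_{\mu\nu}\ell^\nu\hat{=}c\,\ell_\mu$. Using both foliations this removes the mixed components $a_i^\pm$, leaving $F=a\,\rd u\wedge\rd\rho+\tfrac12 f_{ij}\ef^i\wedge\ef^j$.

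The second step, where the proposition is actually proved, is only announced. The tools you plan for it (regularity in $\rho$, integration over $S$, further use of Einstein's equations) are not what makes it work. The missing idea is that on a non-expanding horizon Maxwell's equations transport the surviving data along the generators:
\begin{itemize}
\item $\Lie_{\bar\ell}c=0$. This comes from $\nabla^\mu F_{\mu\nu}=0$ contracted with an extension of $\ell$, using that the pull-back of $\nabla_\mu\ell_\nu$ vanishes on a totally geodesic null hypersurface.
\item $\Lie_{\bar\ell}\bar F=0$ for the pull-back $\bar F$. This is Cartan's formula with $\rd\bar F=0$ and $\bar\ell\lrcorner\bar F=0$.
\end{itemize}
On ${\mathcal H}_s'$ (generator $\frac{\partial}{\partial\rho}$, $c=\Gamma^{-1}a$) this gives $\rho$-independence of $a$ and $f_{ij}$. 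On ${\mathcal H}_s$ (generator $\frac{\partial}{\partial u}-\rho_s(u)K^i\frac{\partial}{\partial x^i}$), Lemma~\ref{lm:semi-basic} turns it into $\Gamma^{-1}\partial_u a-\rho_s\Lie_K(\Gamma^{-1}a)=0$ and $\partial_u\bar f-\rho_s\Lie_K\bar f=0$. Taking ${\mathcal H}$ itself ($\rho_s\equiv0$) yields $u$-independence. Any finite $s$ (so $\rho_s\neq0$), together with $\Lie_K\Gamma=0$, then yields $\Lie_Ka=0=\Lie_K\bar f$. Everything is local.

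Your component route can be closed the same way. Once $a_i^\pm=0$, $\rd F=0$ alone gives $\partial_\rho\bar f=0$, $\partial_u\bar f=\rho\,\rd(K\lrcorner\bar f)$, $\rd\bar f=0$ and $\rd a=K\lrcorner\bar f$ on $S$. Hence $\bar f$ is $(u,\rho)$-independent and $K$-invariant, and $a=a_0(x)+c(u,\rho)$. Only $c$ then has to be killed by the co-closed equation, which is exactly the statement $\Lie_{\bar\ell}c=0$ on the two families.

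Your fallback for $K$, via $K$-invariance of the stress tensor, would not work as stated. $T_{\mu\nu}$ does not determine $F$ (for example, duality rotations in $d=4$ leave it unchanged), and that is precisely the inheritance question.

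The $Y$ step is also a gap. Knowing that $\Lie_YF$ again has the restricted form says nothing about whether it vanishes. The paper gets $Y$-invariance from the two identities $\rd a=K\lrcorner\bar f$ and $\rd\bar f=0$ (the $\rho=0$ part of $\rd F=0$). Together with $A=\const$ these make $Y\lrcorner F=\rd\bigl(\tfrac12A\rho u^2a-ua\bigr)$ exact, so $\Lie_YF=\rd(Y\lrcorner F)=0$.

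Your commutator remark $[U,Y]=AB-K$ is correct, so $K$-invariance would follow from invariance under $U$, $B$ and $Y$. But since you have not established $Y$-invariance, it cannot carry the $K$ case. In the paper, $K$-invariance comes directly from the ${\mathcal H}_s$ transport equation, and $Y$ is handled separately by exactness.
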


We explain the reason for this property on a simpler example of a minimally coupled scalar field. We start with a simple observation. For every non-expanding horizon in a solution to Einstein scalar field equation, the scalar field is constant along generators of the horizon. Indeed, the Einstein tensor contracted with two normal vectors vanishes thus on the horizon
\begin{equation}
0\hat{=}T_{\mu\nu}\ell^\mu\ell^\nu=\left(\ell^\mu\nabla_\mu \phi\right)^2,
\end{equation}
so $\ell^\mu\nabla_\mu \phi\hat{=}0$ at the horizon.

In the case of solution to Einstein scalar field equations, NHG spacetime satisfies null convergence condition thus we have two foliations ${\mathcal H}_s$ and ${\mathcal H}_s'$ of non-expanding horizons. As result in the whole spacetime
\begin{equation}
\left(\frac{\partial}{\partial u}-\rho K^i\frac{\partial}{\partial x^i}-\frac{1}{2}A\rho^2\frac{\partial}{\partial \rho}\right)\phi=0,\quad \frac{\partial \phi}{\partial \rho}=0
\end{equation}
In particular $\phi$ is $\rho$-independent.  Considering first equation at $\rho=0$ we now show that $\left.\frac{\partial\phi}{\partial u}\right|_{\rho=0}=0$, but from $\rho$ independence it holds everywhere. In this case the first equation shows that $K^i\frac{\partial \phi}{\partial x^i}=0$. These properties are sufficient to show that
$\phi$ is preserved by four Killing vector fields of NHG spacetime. Similar reasoning can be applied to Einstein-Maxwell theory. The proof of Proposition \ref{prop:enhancement} will be given in Section \ref{sec:inheritence}.

Proposition \ref{prop:enhancement} shows that the Maxwell field in NHG is necessarily stationary (invariant under shift symmetry), even if we do not assume this property. On the other hand, the result tell us also something about near horizon data of the Maxwell fields. Maxwell fields admit near horizon limit \cite{Kunduri-Reall:2007vf, Kunduri:2013gce}. By applying Proposition \ref{prop:enhancement} to a near horizon limit of a stationary solution of $\Lambda$-Maxwell-Einstein theory with an extremal horizon, we can show that the near horizon data inherit enhanced symmetry \cite{Colling:2025Symmetries}.

\vskip 3pt\noindent
{\bf Plan of the paper:}
Firstly, we will present necessary background in Section \ref{sec:preliminaries}: Subsection
\ref{sec:NHG} introduces basis in NHG and presents the formulas useful in the main part of our work, Subsection \ref{sec:principal} presents crucial results about principal eigenvectors and Subsection \ref{sec:horizon-theory} sketches necessary theory of quasi-local horizons. In Subsection \ref{sec:nec} we provide a proof that in spacetime satisfying null convergence condition every null hypersurface with vanishing expansion is a non-expanding horizon. Sections \ref{sec:hyper} and \ref{sec:null-energy} constitute the main body of the work. They contain proofs of  Propositions \ref{prop:1}, \ref{prop:2} and \ref{prop:3} and Theorem \ref{thm:1}. The version of Dunajski-Lucietti identity is derived in Section \ref{sec:identity}. The final part, Section \ref{sec:inheritence} is devoted to inheritence of the symmetry by the Maxwell field. 

\vskip 3pt\noindent
{\bf Notation:}
We use standard convention that components of the vectors are labelled by upper indices and of forms by lower indices. The small latin indices will have range $1,\ldots, d-2$. They are raised or lowered using metric $h$.  The greek indices will have range $1,\ldots, d$ and they are reserved for objects on the spacetime. They are raised and lowered using metric $g$. We use Einstein summation convention (summation over repeated indices in the equation, one upper and one lower). If possible, we use index free notation. For example, metric $h$ stands for $h_{ij}\rd x^i\rd x^j$. Symbol $:=$ introduce notation (definition), $\hat{=}$ means equality to holds on a hypersurface, which will be clear from the context and symbol $\lrcorner$ denotes contraction of a form or tensor with a vector field. Additionally, $\nabla$ is the covariant derivative defined by metric $g$ on $M$, and $D$ is covariant derivative defined by metric $h$ on $S$.
The signature of the Lorentzian metric on a Lorentzian manifold (spacetime) is $(-+\cdots+)$.  Hypersurface always means a smooth embedded submanifold of codimension $1$.

\subsection*{Acknowledgements}

We would like to thank Alex Colling,  Maciej Dunajski and James Lucietti for comments and discussions about extremal black holes.  We would like to dedicate this work to the memory of Jurek Lewandowski, who turned our interest to theory of quasi-local horizons.

\section{Preliminaries}\label{sec:preliminaries}

We will first describe necessary ingredients of near horizon geometry \cite{Kunduri:2013gce} , relevant properties of second order operators \cite{andersson2008stability,Dunajski:2023IntrinsicRigidity}, and facts from theory of quasi-local horizons \cite{Ashtekar:1998sp, Ashtekar:2004cn, Ashtekar:2001jb, Lewandowski:2004QuasilocalHigherDim}. 

\subsection{Near horizon geometry spacetimes}\label{sec:NHG}

Near horizon geometry (NHG) is a spacetime $M=\R\times S\times \R$ with a Kundt metric of special form \cite{Kunduri:2013gce}
\begin{equation}\label{eq:NHG-original}
g=2\rd u \left(\rd r+rX_i(x)\rd x^i+\frac{1}{2}r^2 F(x)\rd u^2\right)+h_{ij}(x)\rd x^i\rd x^j,
\end{equation}
where $F(x)$, $X_i(x)$ and $h_{ij}(x)$ are a function, a $1$-form and a Riemannian metric on $S$. We assume that $S$ is compact and connected. This Gaussian null coordinates will not be most useful in our analysis.

Let us change coordinate system replacing $r=\Gamma(x) \rho$ where $\Gamma>0$ is a smooth function on $S$ and $\rho$ is the new variable.
The form of the metric is now \eqref{eq:metric} with
\begin{equation}
K_i=D_i\Gamma+X_i\Gamma,\quad A=F\Gamma-\frac{K_i K^i}{\Gamma}.
\end{equation}
We can introduce a basis of one forms labelled by $+,-,1,\ldots, d-2$
\begin{equation}\label{eq:basis}
\ef^+=\rd u,\quad \ef^-=\rd \rho+\frac{1}{2}A\rho^2\rd u,\quad \ef^i=\rd x^i+\rho K^i\rd u.
\end{equation}
This not an orthonormal basis, however, the metric takes a simple form
\begin{equation}
g=2\Gamma \ef^+\ef^-+h_{ij}\ef^i \ef^j.
\end{equation}
The dual basis of vectors is given by
\begin{equation}
e_+=\frac{\partial}{\partial u}-\rho K^i\frac{\partial}{\partial x^i}-\frac{1}{2}\rho^2 A\frac{\partial}{\partial \rho},\quad e_-=\frac{\partial}{\partial \rho},\quad e_i=\frac{\partial}{\partial x^i},
\end{equation}
and the inverse metric can be written using this basis as
\begin{equation}
g^{-1}=2\Gamma^{-1}e_+e_-+h^{ij}e_ie_j.
\end{equation}
Let us notice that $k=\Gamma^{-1}e_-$ and $\ell_C'=\Gamma e_+$. Moreover, if $A=\const$ then $\ell_C=-\frac{1}{\Gamma\rho^2}e_+$.

In our analysis we will need also a connection:
\begin{enumerate}
\item Covariant derivatives of the vector field $e_+$
\begin{align}
\nabla_{e_+} e_+ &=-\rho\,\Gamma^{-1}\big(\Gamma A+K(\Gamma)\big)e_+ 
                   -\frac{1}{2}\rho^2\Gamma h^{ij}\frac{\partial A}{\partial x^j}e_i,\\
\nabla_{e_-} e_+ &=-\frac{1}{2}h^{ij}\Big(K_j+\frac{\partial \Gamma}{\partial x^j}\Big)e_i,  \\ 
\nabla_{e_i} e_+ &=\frac{1}{2}\Gamma^{-1}\Big(\frac{\partial\Gamma}{\partial x^i} -K_i\Big)e_+ 
                   -\rho\, h^{jk}D_{(j} K_{i)}e_k.
\end{align}
These are covariant derivatives that will be mostly used in our work.
\item Similarly, for the vector field $e_-$
\begin{align}
\nabla_{e_+} e_- &= \rho\, Ae_- - \frac{1}{2}h^{ij}\Big(\frac{\partial\Gamma}{\partial x^j} -K_j\Big)e_i ,\\ \nabla_{e_-} e_- &= 0, \\
\nabla_{e_i} e_- &= \frac{1}{2}\Gamma^{-1}\Big(\frac{\partial \Gamma}{\partial x^i}+K_i\Big)e_-.
\end{align}
In particular, $e_-$ generates affinely parametrized null geodesics.
\item Finally, covariant derivatives of the vector fields $e_i$
\begin{align}
 \nabla_{e_+} e_i &= \frac{1}{2}\Gamma^{-1}\Big(\frac{\partial\Gamma}{\partial x^i} -K_i\Big)e_+
                     +\frac{1}{2}\rho^2\frac{\partial A}{\partial x^i}e_- 
                     -\rho\, h^{jk}\big(D_{[k}K_{i]}+\Gamma_{kil} K^l\big)e_j,\\ 
 \nabla_{e_-} e_i &=\frac{1}{2}\Gamma^{-1}\Big(K_i+\frac{\partial \Gamma}{\partial x^i}\Big)e_- ,\\
 \nabla_{e_i} e_j &= \rho\,\Gamma^{-1}D_{(i}K_{j)}e_- + \Gamma^k\,_{ji}e_k,
\end{align}
where $\Gamma_{ijk}$ are Christoffel symbols for the metric $h=h_{ij}\rd x^i \rd x^j$ on $S$.
\end{enumerate}

\subsection{Principal eigenvectors}\label{sec:principal}

In our work we will need some properties of principal eigenvectors of second order elliptic operators on a compact,  connected manifolds. The following results were proven in \cite{andersson2008stability,Dunajski:2023IntrinsicRigidity}.

\begin{lm}\label{lm: principal}
Let $S$ be a compact,  connected Riemannian manifold with metric $h$ and $X'$ be a one form. Introduce
\begin{equation}
{\mathbb L}'(\psi)=-D^iD_i\psi-D^i(X_i'\psi).
\end{equation}
Then there exists a smooth function $\Gamma'>0$ such that ${\mathbb L}'(\Gamma')=0$. Moreover,
\begin{enumerate}
\item If ${\mathbb L}'(\psi)\geq 0$ (or ${\mathbb L}'(\psi)\leq 0$) for some smooth function $\psi$ on $S$ then $\psi$ is proportional to $\Gamma'$ and ${\mathbb L}'(\psi)= 0$.
\item If a vector field $K'$ satisfies $\Lie_{K'}h=0$ and $\Lie_{K'}X'=0$ then $\Lie_{K'}\Gamma'=0$.
\end{enumerate}
\end{lm}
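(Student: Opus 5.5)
The plan is to treat $\mathbb L'$ through its formal adjoint. We have
\[
\mathbb L'^{*}(\phi)=-D^iD_i\phi+X'^iD_i\phi .
\]
This operator has no zeroth-order term, so constants lie in its kernel and it obeys the strong maximum principle on the compact connected manifold $S$.

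\textbf{Existence of $\Gamma'$.} We use the Fredholm alternative for the elliptic operator $\mathbb L'^{*}$ on $S$. By the strong maximum principle its kernel consists exactly of the constants, since a function $\phi$ with $\mathbb L'^{*}\phi=0$ attains an interior maximum and is therefore constant. Hence $\ker \mathbb L'$ is one-dimensional. For positivity we use the Krein--Rutman theorem: for large $c>0$ the resolvent $(\mathbb L'+c)^{-1}$ is compact and strongly positive (by the maximum principle), so its principal eigenfunction $\Gamma'$ is positive and smooth by elliptic regularity. We must check that the principal eigenvalue of $\mathbb L'$ is $0$. It coincides with the principal eigenvalue of $\mathbb L'^{*}$, whose principal eigenfunction is the positive constant with eigenvalue $0$. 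Thus $\mathbb L'(\Gamma')=0$ with $\Gamma'>0$.

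\textbf{Point 1.} Suppose $\mathbb L'(\psi)\ge 0$. Integrate against the constant function $1$, which lies in $\ker\mathbb L'^{*}$:
\[
\int_S \mathbb L'(\psi)\,dV=\int_S \psi\,\mathbb L'^{*}(1)\,dV=0 .
\]
Equivalently, $\mathbb L'(\psi)$ is a divergence, so its integral vanishes. A nonnegative smooth function with zero integral is identically zero, hence $\mathbb L'(\psi)=0$. Since $\ker\mathbb L'$ is one-dimensional, $\psi$ is proportional to $\Gamma'$. The case $\mathbb L'(\psi)\le0$ follows by replacing $\psi$ with $-\psi$.

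\textbf{Point 2.} The conditions $\Lie_{K'}h=0$ and $\Lie_{K'}X'=0$ make $\Lie_{K'}$ commute with $\mathbb L'$, because $\mathbb L'$ is built naturally from $h$ and $X'$. Therefore
\[
\mathbb L'(\Lie_{K'}\Gamma')=\Lie_{K'}\mathbb L'(\Gamma')=0 .
\]
By the one-dimensionality of the kernel, $\Lie_{K'}\Gamma'=c\,\Gamma'$ for some constant $c$. To show $c=0$, integrate over $S$. Since $K'$ is Killing it is divergence-free, so
\[
\int_S K'(\Gamma')\,dV=-\int_S \Gamma'\,\operatorname{div}K'\,dV=0 .
\]
The right-hand side gives $c\int_S\Gamma'\,dV$, and $\int_S\Gamma'\,dV>0$ because $\Gamma'>0$. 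Hence $c=0$. An alternative argument: at a maximum point of $\Gamma'$ we have $K'(\Gamma')=0$, which again forces $c=0$.

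The main obstacle is the existence step, namely showing that the principal eigenvalue of the non-self-adjoint operator is exactly zero and that it has a positive eigenfunction. We handle this with Krein--Rutman applied to $\mathbb L'^{*}$ together with the equality of the principal eigenvalues of an operator and its adjoint, as in \cite{andersson2008stability}.
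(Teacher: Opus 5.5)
Your proof is correct, and its skeleton matches the paper's. Both arguments use a positive principal eigenfunction with eigenvalue zero, the vanishing integral of the divergence ${\mathbb L}'(\psi)$ for point 1, and the commutation of $\Lie_{K'}$ with ${\mathbb L}'$ plus integration for point 2. The differences lie in which spectral facts you invoke.

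The paper obtains $\dim\ker{\mathbb L}'=1$ from the quoted simplicity (``multiplicity free'') of the principal eigenvalue. You derive it independently from the Fredholm alternative (index zero) together with the strong maximum principle for ${\mathbb L}'^{*}$, whose kernel is exactly the constants. This is a pleasant, self-contained argument.

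To show $\lambda_o=0$, you appeal to the equality of the principal eigenvalues of ${\mathbb L}'$ and ${\mathbb L}'^{*}$. That is valid but heavier than needed. The identity you already write in point 1 gives $\lambda_o\int_S\Gamma'\,dV=\int_S{\mathbb L}'(\Gamma')\,dV=0$, hence $\lambda_o=0$ directly, which is essentially the paper's sign argument.

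Your alternative in point 2, evaluating $\Lie_{K'}\Gamma'=c\,\Gamma'$ at a maximum of $\Gamma'$, is a nice touch. It forces $c=0$ for any vector field and any positive function, without using that $K'$ is divergence free.
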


\begin{proof}
Let us notice that ${\mathbb L}'(\psi)$ is a total divergence, thus
\begin{equation}
\int_S{\mathbb L}'(\psi)\sqrt{\det h}\rd^{d-2} x=0.
\end{equation}
Suppose that ${\mathbb L}'(\psi)\geq 0$. Integral of a non-negative function vanishes if and only if this function is zero, thus we obtain ${\mathbb L}'(\psi)=0$. Similar argument shows that if ${\mathbb L}'(\psi)\leq 0$ then ${\mathbb L}'(\psi)=0$ as well.

There exists a principal eigenvalue $\lambda_o$ and principal eigenfunction $\Gamma'$ such that
\begin{equation}
{\mathbb L}'(\Gamma')=\lambda_o\Gamma',\quad \Gamma'>0,
\end{equation}
and moreover eigenvalue $\lambda_o$ is multiplicity free \cite{andersson2008stability}. If $\lambda_o\not=0$ then ${\mathbb L}'(\Gamma')>0$ or ${\mathbb L}'(\Gamma')<0$, thus from what we proved above we obtain contradiction and $\lambda_o=0$. Moreover, as principal eigenvalue is multiplicity free if ${\mathbb L}'(\psi)=0$ then $\psi$ is proportional to $\Gamma'$. This shows first part of the lemma.

Suppose now that $\Lie_{K'}h=0$ and $\Lie_{K'}X'=0$ then as $K'$ also preserves covariant derivative
\begin{equation}
{\mathbb L}'(\Lie_{K'}\Gamma')=\Lie_{K'}\left({\mathbb L}'(\Gamma')\right)=0.
\end{equation}
This means that $\Lie_{K'}\Gamma'$ is proportional to $\Gamma'$. However, integral of $\Lie_{K'}\Gamma'$ over $S$ vanishes, because $K'$ as a Killing vector field is divergence free. Thus this proportionality constant needs to be zero. We showed that $\Lie_{K'}\Gamma'=0$.
\end{proof}

We will call $\Gamma'$ principal eigenvector of ${\mathbb L}'$. It is unique up to rescaling by a positive constant.

An important role in our analysis will be played by the stability operator ${\mathbb L}$ for the horizon ${\mathcal H}$. It governs variation of the expansion if ${\mathcal H}$ is perturbed \cite{andersson2008stability, Booth-Fairhurst-2008, Mars_2012} (as it will be seen in Section \ref{sec:hyper}). The operator has a simple form\footnote{We remind that $X$ is a form introduced in coordinates \eqref{eq:NHG-original}.}
\begin{equation}\label{eq:stability-L}
{\mathbb L}(\psi):=-D^iD_i\psi-D^i(X_i\psi),
\end{equation}
and it is an example of an operator considered in Lemma \ref{lm: principal}. 
A simple computation gives
\begin{equation}
\LAMS(\Gamma\psi)=-D^i\left(\Gamma D_i\psi+K_i\psi\right).
\end{equation}
We notice that $D_iK^i=-{\mathbb L}(\Gamma)$. By Lemma \ref{lm: principal}, there exists unique up to a constant choice of $\Gamma$ such that vector field $K$ is divergence free \eqref{eq:div-K}. Thus, we can always assume this condition in NHG spacetime when we transform from coordinates \eqref{eq:NHG-original} to coordinates \eqref{eq:metric}.

\subsection{Theory of horizons}\label{sec:horizon-theory}

The main objective of quasi-local theory of horizons is to replace teological notion of event horizon with a local or quasi-local objects \cite{Hajicek1973a, Ashtekar:2004cn, Ashtekar:2001jb}. The program introduces a hierarchy of definitions culminating in a notion of isolated horizon which captures many properties of the Killing horizons \cite{Ashtekar:1998sp}. The definitions of hypersurface without expansion, totally geodesic null hypersurface up to a non-expanding horizon are based only on data induced locally on the null hypersurface. The isolated horizon structure requires additionally a specific choice of null normal vector field.  We will not need this additional structure in our work, we direct interested reader to \cite{Ashtekar:2004cn, Ashtekar:2001jb, Ashtekar:1998sp}.
There are various versions of the definitions of non-expanding horizons. Our approach does not assume any energy conditions nor Einstein's equations and it is purely geometrical (compare for example \cite{Ashtekar:2021gkq}).  

Let ${\mathcal N}$ be a null hypersurface (codimenion $1$ submanifold)  in a spacetime $M$ (dimension $d$ and signature $-+\cdots +$). The restricted metric $\g$ has signature $(0,+\cdots +)$. Let $\bar{\ell}$ will be a non-vanishing null vector field on ${\mathcal N}$.\footnote{It exists if $M$ can be time oriented. Otherwise, one can always consider time-orientable double cover of $M$, so this assumption is not restrictive. Moreover, NHG spacetimes are time-orientable.} and $\ell$ the push-forward of this vector into the spacetime. The vector $\ell$ is normal to ${\mathcal N}$ (normal vectors are tangent for null hypersurfaces). Moreover, $\ell$ generates a congruence of null geodesics on ${\mathcal N}$,
\begin{equation}
    \ell^\mu \nabla_\mu \ell^\nu\hat{=}\kappa^{(\bar{\ell})}\ell^\nu.
\end{equation}
The function $\kappa^{(\bar{\ell})}$ is an acceleration (called surface gravity \cite{Ashtekar:2001jb}). It depends on a choice of $\bar{\ell}$.
The choice of $\bar{\ell}$ is not unique but any two such vector fields differ by rescaling by a smooth function $f$ on ${\mathcal N}$.\footnote{There is a distinguished choice of a densitized null vector field (see \cite{CIAMBELLI20261}), but not the vector field itself.}

We can define the expansion $\theta^{(\bar{\ell})}$ of this congruence \cite{Ashtekar:2001jb}. The expansion depends on the choice of $\bar{\ell}$, but it has a simple transformation rule
\begin{equation}
\theta^{(f\bar{\ell})}=f\theta^{(\bar{\ell})},\quad f\in C^\infty({\mathcal N}),\ f\not=0.
\end{equation}
The condition of vanishing expansion is thus independent of the choice of $\bar{\ell}$. Similarly, condition of expansion being positive (or negative) depends only on the choice of time orientation on ${\mathcal N}$.

A null hypersurface is totally geodesic  if and only if $\Lie_{\bar{\ell}}\g=0$ (vanishing of both expansion and shear). Here similarly
\begin{equation}
\Lie_{f\bar{\ell}}\g=f\Lie_{\bar{\ell}}\g,\quad f\in C^\infty({\mathcal N}).
\end{equation}
We see that condition of vanishing of $\Lie_{\bar{\ell}}\g$ is independent of the choice of $\bar{\ell}$. Raychadhuri equation shows that for a null hypersurface 
\begin{equation}
\Lie_{\bar{\ell}}\g=0\Longrightarrow R_{\mu\nu}\ell^\mu\ell^\nu\hat{=}0,
\end{equation}
for any $\ell$ normal to the null hypersurface (we remind that $\hat{=}$ means equality on ${\mathcal N}$). 

Suppose that ${\mathcal N}$ is totally geodesic. Spacetime covariant derivative  induces then a torsion-free covariant derivative $\bar{\nabla}$ on ${\mathcal N}$. This covariant derivative satisfies condition
\begin{equation}
\bar{\nabla}\g=0,\quad \bar{\nabla}\bar{\ell}=\omega^{(\bar{\ell})}\otimes \bar{\ell},
\end{equation}
where $\omega^{(\bar{\ell})}$ is some $1$-form on ${\mathcal N}$. The form $\omega^{(\bar{\ell})}$ is called rotation $1$-form and it satisfies $\kappa^{(\bar{\ell})}=\bar{\ell}\lrcorner \omega^{(\bar{\ell})}$. The change of $\bar{\ell}$ leads to a bit more complicated transformation law
\begin{equation}
\omega^{(f\bar{\ell})}=\omega^{(\bar{\ell})}+\rd \ln f,\quad f\in C^\infty({\mathcal N}),\ f\not=0.
\end{equation}
This means that a two form $\Omega^{(\bar{\ell})}=\rd \omega^{(\bar{\ell})}$ is independent of the choice of $\bar{\ell}$.

It turns out that $\Omega$ encodes important information about Ricci tensor. Denote by ${\mathfrak f}^{(\bar{\ell})}$ a pull-back to the null hypersurface of a one form  $R_{\mu\nu}\ell^\nu\rd x^\mu$.
The following equality holds on a totally geodesic null hypersurface \cite{Ashtekar:2001jb, Lewandowski:2004QuasilocalHigherDim}
\begin{equation}\label{eq:non-expanding-cond}
{\mathfrak f}^{(\bar{\ell})}=\bar{\ell}\lrcorner \Omega,
\end{equation}
for any $\ell$ normal to the null hypersurface.

\begin{df}
A totally geodesic null hypersurface is called non-expanding horizon if  $\bar{\ell}\lrcorner \Omega=0$.
\end{df}

Suppose that a spacetime is foliated by totally geodesic null hypersurfaces. Normal null vectors are defined only up to rescaling by a function, but in the case of foliation there exists a distinguished choice of normals unique up to rescaling by a function constant on every leaf of foliation \cite{Pawlowski:2003ys, Lewandowski:2016HigherDim}.\footnote{The natural choice of normal defines on leaves of foliation a  distinguished structure of so called extremal weakly isolated horizons.} For any function $\xi$ defining foliation, we introduce $\ell^\mu=\nabla^\mu \xi$. Choosing another $\xi$ rescales $\ell$ but only by a function constant on every leaf.  Moreover as $\ell$ is a gradient
\begin{equation}
\ell^\mu \nabla_\mu \ell^\nu=\ell^\mu \nabla^\nu \ell_\mu=\frac{1}{2}\nabla^\nu (\ell^\mu\ell_\mu)=0.
\end{equation}
In particular, denoting corresponding vector on leaf of foliation by $\bar{\ell}$, 
\begin{equation}
\bar{\ell}\lrcorner \omega^{(\bar{\ell})}=0.
\end{equation}
As the rescaling by a constant function on a hypersurface does not change rotation form, $\omega^{(\bar{\ell})}$ is uniquely associated with the foliation. Cartan formula shows for any function $f$
\begin{equation}
\Lie_{f\bar{\ell}}\omega^{(\bar{\ell})}=f\bar{\ell}\lrcorner \Omega+\rd\left(f\bar{\ell}\lrcorner \omega^{(\bar{\ell})}\right)=f\bar{\ell}\lrcorner \Omega.
\end{equation}
We conclude that a leaf of a foliation by totally geodesic null hypersurfaces is a non-expanding horizon if and only if on this leaf $\Lie_{f\bar{\ell}}\omega^{(\bar{\ell})}=0$ for any non-vanishing $f$.

Let us notice that our definitions does not assume apart of time orientability of $M$ any topological properties of hypersurfaces. 
We will now describe the only topological restriction on null hypersurfaces that will play a role in our analysis.

The null hypersurfaces under consideration will be diffeomorphic to a trivial $\R$ bundle with the null vector field tangent to the fibers. Moreover, the base manifold of this bundle will be compact and connected. In this situation, a cross section of ${\mathcal N}$ will mean a cross section of this bundle.

Finally, we remind notion of a Killing horizon \cite{Mars:2018hag}.

\begin{df}
A null hypersurface ${\mathcal N}$ is a Killing horizon for a Killing vector field $V$ of $g$ if $V$ restricted to ${\mathcal N}$ is a non-vanishing normal vector field to ${\mathcal N}$. It is called extremal if the acceleration of this vector field vanishes ($\kappa^{(\bar{V})}=0$).
\end{df}

\begin{remark}
The same null hypersurface can be Killing horizon with respect to two different Killing vector fields \cite{Mars:2018hag}. For example, a segment $u>0$ of ${\mathcal H}$ \eqref{eq:H} is an extremal Killing horizon for $U$ and a non-extremal Killing horizon for $B$. Note that our definition does not require that the horizon is invariant under the finite isometries generated by the Killing vector field as in \cite{Racz1992}. In fact, many Killing vector fields will be incomplete in NCG spacetimes.
\end{remark}

Let us notice that as $V$ is a Killing vector field, its restriction $\bar{V}$ to ${\mathcal N}$ preserves the induced metric $\g$. As result every Killing horizon is a totally geodesic null hypersurface. Moreover, because $V$ preserves covariant derivative, $\bar{V}$ preserves the induced connection $\bar{\nabla}$. Since $\bar{V}$ also commutes with itself, the  definition of $\omega^{(\bar{V})}$ ensures that
\begin{equation}
\Lie_{\bar{V}}\omega^{(\bar{V})}=0.
\end{equation}
Cartan formula shows that ${\mathcal N}$ is a non-expanding horizon if and only if $\rd \kappa^{(\bar{V})}=0$. Indeed,
\begin{equation}
\bar{V}\lrcorner \Omega=\Lie_{\bar{V}}\omega^{(\bar{V})}-\rd\left(\bar{V}\lrcorner\omega^{(\bar{V})}\right)=-\rd \kappa^{(\bar{V})}.
\end{equation}
We see that a Killing horizon is a non-expanding horizon if and only if the surface gravity $\kappa^{(\bar{V})}$ is constant on its connected components. In particular, extremal Killing horizons are always non-expanding horizons.

\subsection{Null convergence condition}\label{sec:nec}

We will now describe condition imposed on Ricci tensor.

\begin{df}
A spacetime $M$ satisfies null convergence condition (NCC) if in every point of $M$, for every null vector $v$
\begin{equation}
R_{\mu\nu}v^\mu v^\nu\geq 0.
\end{equation}
\end{df}

Our definition does not assume that Einstein equation (with some matter fields) holds, but if this happens for the matter satisfying null energy condition, then the spacetime satisfies null convergence condition as well.
We will show now that if null convergence condition is satisfied then vanishing expansion is sufficient for the null hypersurface to be a non-expanding horizon. 
This fact can be easily deduced from results of \cite{Ashtekar:2004cn, Lewandowski:2004QuasilocalHigherDim, Hounnonkpe2025}.

\begin{prop}\label{prop:non-expanding}
Suppose that null convergence condition holds in $M$. Then every null hypersurface with vanishing expansion in $M$ is a non-expanding horizon.
\end{prop}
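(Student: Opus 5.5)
The plan is to use the Raychaudhuri equation to upgrade vanishing expansion to total geodesicity, and then to use the null convergence condition a second time, via a first-order perturbation argument, to kill the remaining Ricci components entering \eqref{eq:non-expanding-cond}.

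\textbf{Step 1: total geodesicity.} Let ${\mathcal N}$ be a null hypersurface with $\theta^{(\bar{\ell})}\hat{=}0$ for some (hence every) null normal $\bar{\ell}$. The Raychaudhuri equation along the generators reads
\begin{equation*}
\Lie_{\bar{\ell}}\theta^{(\bar{\ell})}-\kappa^{(\bar{\ell})}\theta^{(\bar{\ell})}+\frac{1}{d-2}\left(\theta^{(\bar{\ell})}\right)^2+\sigma_{ab}\sigma^{ab}+R_{\mu\nu}\ell^\mu\ell^\nu\hat{=}0.
\end{equation*}
With $\theta\hat{=}0$ this becomes $\sigma_{ab}\sigma^{ab}+R_{\mu\nu}\ell^\mu\ell^\nu\hat{=}0$. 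The shear term is non-negative because it is a square with respect to the positive definite metric on the space transversal to $\ell$. The Ricci term is non-negative by NCC. Hence both vanish, so $\sigma\hat{=}0$ and $R_{\mu\nu}\ell^\mu\ell^\nu\hat{=}0$. Vanishing expansion and shear is exactly $\Lie_{\bar{\ell}}\g=0$, so ${\mathcal N}$ is totally geodesic. The induced connection $\bar{\nabla}$, the rotation form $\omega^{(\bar{\ell})}$, $\Omega$ and the identity \eqref{eq:non-expanding-cond} are therefore all available.

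\textbf{Step 2: the one-form ${\mathfrak f}^{(\bar{\ell})}$ vanishes.} By \eqref{eq:non-expanding-cond} it suffices to show that $R_{\mu\nu}\ell^\nu X^\mu\hat{=}0$ for every $X$ tangent to ${\mathcal N}$. For $X=\ell$ this is already given by Step 1. Any other tangent vector at a point $p$ is $\ell$ plus a vector $X$ tangent to ${\mathcal N}$ with $g(X,X)>0$, so it is enough to treat such $X$.

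Fix such $X$ at $p$, and choose a null vector $n$ at $p$ with $g(\ell,n)=1$. For $t\in\R$ consider
\begin{equation*}
v_t=\ell+tX-\tfrac{1}{2}t^2 g(X,X)\,n .
\end{equation*}
Since $g(\ell,X)=0$ and $\ell$, $n$ are null, we get $g(v_t,v_t)=t^2g(X,X)-t^2g(X,X)=0$, so $v_t$ is null. NCC then gives
\begin{equation*}
0\le R_{\mu\nu}v_t^\mu v_t^\nu=R_{\mu\nu}\ell^\mu\ell^\nu+2t\,R_{\mu\nu}\ell^\mu X^\nu+O(t^2)=2t\,R_{\mu\nu}\ell^\mu X^\nu+O(t^2).
\end{equation*}
A function of $t$ that is non-negative near $0$, vanishes at $0$ and has linear coefficient $2R_{\mu\nu}\ell^\mu X^\nu$ must have that coefficient zero. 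Hence $R_{\mu\nu}\ell^\mu X^\nu=0$.

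\textbf{Step 3: conclusion.} Steps 1 and 2 give ${\mathfrak f}^{(\bar{\ell})}=0$ on ${\mathcal N}$. By \eqref{eq:non-expanding-cond}, $\bar{\ell}\lrcorner\Omega=0$, so ${\mathcal N}$ is a non-expanding horizon.

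The only step needing care is Step 2. A naive perturbation $\ell+tX$ is not null, so the correction of order $t^2$ along the transversal null vector $n$ is essential. One also has to check that this correction does not contribute at order $t$, which holds because it enters $R_{\mu\nu}v_t^\mu v_t^\nu$ first at order $t^2$.
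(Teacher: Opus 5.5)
\textbf{There is a computational error in Step 2.} Your Step 1 is the paper's Raychaudhuri argument. Step 2 is an explicit version of the paper's Lemma~\ref{lm:null}. That lemma applies a Lagrange multiplier to $R(v,v)$ on the cone $\{g(v,v)=0,\ v\neq0\}$ at its minimum $\ell$, giving $R_{\mu\nu}\ell^\nu=c\,\ell_\mu$, whose pull-back to $\mathcal N$ vanishes. This is equivalent to your statement $R_{\mu\nu}\ell^\mu X^\nu=0$ for all $X\in\ell^\perp$. Your curve $v_t$ is meant to be a curve on that cone through $\ell$ with velocity $X$. So the approach is the same first-order argument without the submanifold and Lagrange-multiplier language. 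The paper's abstract lemma has the advantage that it is reused later, to get $\beta=0$ from $\gamma=0$.

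However, the check that $v_t$ is null is wrong. Expanding with $g(\ell,\ell)=g(n,n)=g(\ell,X)=0$ and $g(\ell,n)=1$,
\[
g(v_t,v_t)=t^2g(X,X)+2\bigl(-\tfrac12 t^2 g(X,X)\bigr)g(\ell,n)+2t\bigl(-\tfrac12 t^2 g(X,X)\bigr)g(X,n)=-t^3g(X,X)\,g(X,n).
\]
You dropped the cross term between $tX$ and the $n$-correction. If $g(X,n)\neq0$, then $v_t$ is timelike for one sign of $t$ and spacelike for the other. NCC says nothing about non-null vectors, since it does not imply the timelike convergence condition. So the inequality $0\le R_{\mu\nu}v_t^\mu v_t^\nu$ is unjustified as written.

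\textbf{How to repair it.} Fix $n$ first and take $X$ in the screen $\{\ell,n\}^\perp$. This subspace is spacelike and satisfies $T_p\mathcal N=\R\ell\oplus\{\ell,n\}^\perp$, via $Y=g(Y,n)\ell+\bigl(Y-g(Y,n)\ell\bigr)$. By linearity of $X\mapsto R_{\mu\nu}\ell^\mu X^\nu$, it suffices to treat such $X$, and then $g(X,n)=0$ makes $v_t$ exactly null. Alternatively, for a given spacelike $X\in\ell^\perp$, choose $n$ null with $g(\ell,n)=1$ inside the Lorentzian hyperplane $X^\perp$, which contains $\ell$. A third option is to replace the coefficient of $n$ by $-t^2g(X,X)/\bigl(2(1+t\,g(X,n))\bigr)$, which is still $O(t^2)$ for small $|t|$. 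With any of these fixes your proof is complete.
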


\begin{proof}
Consider a null hypersurface ${\mathcal N}$ in $M$ with a null non-vanishing vector field $\bar{\ell}$. The first part of the argument uses Raychadhuri equation that can be written as (see \cite{Ashtekar:2000hw, Lewandowski:2004QuasilocalHigherDim})
\begin{equation}
\Lie_{\bar{\ell}}\theta^{(\bar{\ell})}=\kappa^{(\bar{\ell})}\theta^{(\bar{\ell})}-\frac{1}{d-2}(\theta^{(\bar{\ell})})^2-\left|\sigma^{(\bar{\ell})}\right|^2-R_{\mu\nu}\ell^\mu\ell^\nu.
\end{equation}
Here, $\sigma^{(\bar{\ell})}$ is the shear tensor (defined on the screen bundle) and $\left|\sigma^{(\bar{\ell})}\right|^2\geq 0$ is its squared norm. The norm vanishes if only if shear vanishes. The function $\kappa^{(\bar{\ell})}$ is an acceleration of $\ell$. Let us remind that $\Lie_{\bar{\ell}}\g=0$ if and only if both expansion and shear are zero.

If expansion vanishes then Raychadhuri equation simplifies
\begin{equation}
\left|\sigma^{(\bar{\ell})}\right|^2+R_{\mu\nu}\ell^\mu\ell^\nu\hat{=}0.
\end{equation}
If null convergence condition is satisfied then, it is a sum of two non-negative terms. Consequently both shear vanishes and $R_{\mu\nu}\ell^\mu\ell^\nu\hat{=}0$ (see \cite{Ashtekar:2004cn, Lewandowski:2004QuasilocalHigherDim}). This proves that under assumption of null convergence condition every null hypersurface with vanishing expansion is totally geodesic.

It remained to show that there exists a function $c\in C^\infty({\mathcal N})$ such that
\begin{equation}
R_{\mu\nu}\ell^\nu\hat{=}c\ell_\mu,
\end{equation}
that is equivalent to condition ${\mathfrak f}^{(\bar{\ell})}=0$ 
(see \eqref{eq:non-expanding-cond}). This will follow from an algebraic fact (compare Proposition 1 in \cite{Hounnonkpe2025}):

\begin{lm}\label{lm:null}
Let $W$ be a real finite dimensional vector space and $\alpha_\pm\colon W\times W\rightarrow \R$ bilinear symmetric forms. Assume that 
\begin{enumerate}
\item $\alpha_-$ is non-degenerate,
\item $\alpha_+(v,v)\geq 0$ for every $v\in W$ such that $\alpha_-(v,v)=0$.
\end{enumerate}
Let $v_0\in W$ be such that $\alpha_+(v_0,v_0)=\alpha_-(v_0,v_0)=0$. Then there exists $c\in \R$ such that for every $v\in W$
\begin{equation}
\alpha_+(v_0,v)=c\alpha_-(v_0,v).
\end{equation}
\end{lm}

\begin{proof}
We can assume $v_0\not=0$ (the case of $v_0=0$ is straightforward).
Consider
\begin{equation}
{\mathcal V}=\{v\in W\colon \alpha_-(v,v)=0,\ v\not=0\}\subset W.
\end{equation}
Form $\alpha_-$ is non-degenerate thus the differential of function $\alpha_-(v,v)$ is nonvanishing for $v\not=0$ and as result ${\mathcal V}$ is a submanifold of codimension $1$. Smooth function $\alpha_+(v,v)$ on ${\mathcal V}$ is non-negative thus $v_0$ is a critical point on ${\mathcal V}$ (a minimum). By Lagrange multipier method there exists $c\in \R$ such that $v_0$ is a critical point on $W$ of a function
\begin{equation}
f(v)=\alpha_+(v,v)-c\alpha_-(v,v).
\end{equation}
The derivative of $f$ at $v_0$ in direction $v$ is given by a simple formula:
\begin{equation}
\partial_vf(v_0)=2\left(\alpha_+(v_0,v)-c\alpha_-(v_0,v)\right).
\end{equation}
As the derivative vanishes at $v_0$ we obtain the result.
\end{proof}

We can apply this fact choosing 
\begin{equation}
\alpha_+(v,v)=R_{\mu\nu}v^\mu v^\nu,\quad \alpha_-(v,v)=g_{\mu\nu}v^\mu v^\nu.
\end{equation}
Together with $v_0=\ell$, they satisfy assumptions of Lemma \ref{lm:null} at every point of ${\mathcal N}$, so there exists function $c$ such that $R_{\mu\nu}\ell^\nu\hat{=}c\ell_\mu$. This function is necessarily smooth because $\ell$ is non-vanishing.
\end{proof}

Let us state a simple corollary from Proposition \ref{prop:non-expanding}.

\begin{cor}
Let $M$ be a spacetime satisfying null convergence condition. Then, every Killing horizon has constant surface gravity on its connected components.
\end{cor}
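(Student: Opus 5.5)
The corollary follows by combining Proposition \ref{prop:non-expanding} with the discussion of Killing horizons in Subsection \ref{sec:horizon-theory}. Let ${\mathcal N}$ be a Killing horizon for a Killing vector field $V$, and let $\bar V$ denote the restriction of $V$ to ${\mathcal N}$.

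\emph{Step 1: ${\mathcal N}$ has vanishing expansion.} Since $V$ is Killing, $\bar V$ preserves the induced metric, so $\Lie_{\bar V}\g=0$. Hence ${\mathcal N}$ is totally geodesic, and in particular its expansion vanishes.

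\emph{Step 2: ${\mathcal N}$ is a non-expanding horizon.} By Proposition \ref{prop:non-expanding}, which uses the null convergence condition, ${\mathcal N}$ is a non-expanding horizon. Concretely, Lemma \ref{lm:null} applied pointwise gives $R_{\mu\nu}V^\nu\hat{=}cV_\mu$. Therefore ${\mathfrak f}^{(\bar V)}=0$, and by \eqref{eq:non-expanding-cond} we get $\bar V\lrcorner\Omega=0$.

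\emph{Step 3: identifying $\bar V\lrcorner\Omega$ with the gradient of the surface gravity.} Since $V$ preserves the connection, $\bar V$ preserves $\bar\nabla$. Together with $[\bar V,\bar V]=0$, this gives $\Lie_{\bar V}\omega^{(\bar V)}=0$. The Cartan formula then yields
\begin{equation}
\bar V\lrcorner\Omega=\Lie_{\bar V}\omega^{(\bar V)}-\rd\bigl(\bar V\lrcorner\omega^{(\bar V)}\bigr)=-\rd\kappa^{(\bar V)}.
\end{equation}

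\emph{Step 4: conclusion.} Combining Steps 2 and 3 gives $\rd\kappa^{(\bar V)}=0$ on ${\mathcal N}$, so $\kappa^{(\bar V)}$ is locally constant, i.e.\ constant on each connected component.

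The only point needing care is Step 3, which uses the full connection data and not just the metric. We need that $\bar V$ preserves $\bar\nabla$, so that $\Lie_{\bar V}\omega^{(\bar V)}=0$ follows from the definition $\bar\nabla\bar V=\omega^{(\bar V)}\otimes\bar V$. This was already verified in the text preceding the definition of Killing horizons, so the corollary is immediate and no further obstacle is expected.
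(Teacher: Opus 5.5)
Your proposal is correct and follows the paper's proof: a Killing horizon is totally geodesic, so under the null convergence condition it is a non-expanding horizon by Proposition~\ref{prop:non-expanding}. The Cartan-formula identity $\bar V\lrcorner\Omega=-\rd\kappa^{(\bar V)}$ then forces $\rd\kappa^{(\bar V)}=0$, so the surface gravity is constant on each connected component.
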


\begin{proof}
Every Killing horizon is a totally geodesic null hypersurface.
In the spacetime satisfying null convergence condition, every totally geodesic null hypersurface is a non-expanding horizon by Proposition \ref{prop:non-expanding}. However, a Killing horizon is a non-expanding horizon if and only if its surface gravity is constant on its connected components.
\end{proof}

The zeroth law of thermodynamics holds under null convergence condition which is weaker than usually assumed in this context dominant energy condition \cite{Bardeen1973, Chrusciel2012},

\section{Null hypersurfaces}\label{sec:hyper}

In this section we will prove Proposition \ref{prop:1} and Proposition \ref{prop:2}. 

We will now consider null hypersurfaces given by the graph,
\begin{equation}\label{eq:N-graph}
{\mathcal N}=\{(u,x,\rho)\in M\colon \rho=f(u,x),\ u\in I\},
\end{equation}
in a NHG spacetime $M$. Here $I$ is an open interval of $\R$ and $f\in C^\infty(I\times S)$ is a smooth function. We parametrize ${\mathcal N}$ by $u,x^i$ (local coordinate system). 
Hypersurface ${\mathcal N}$ has compact cross sections of constant $u$
\begin{equation}
\Psi_{u_o}\colon S\rightarrow {\mathcal N},\quad \Psi_{u_o}(x^i)=(u_o,x^i,f(u_o,x^i)).
\end{equation}
Let us also introduce set of forms on ${\mathcal N}$ defined by 
\begin{equation}\label{eq:e-K-bar}
\bar{e}^i:=\rd x^i +\tilde{K}^i\rd u,\quad \tilde{K}^i:=fK^i+\Gamma D^if,
\end{equation}
and a vector field $\bar{\ell}_o$ on ${\mathcal N}$
\begin{equation}
\bar{\ell}_o:=\frac{\partial}{\partial u}-\tilde{K}^i\frac{\partial}{\partial x^i}.
\end{equation}
Let us remind that $D_if=\frac{\partial f}{\partial x^i}$ and we raise and lower Latin indices using metric~$h$. We will regard $\tilde{K}$ as $u$-dependent vector field on $S$.

\begin{lm}
Condition for ${\mathcal N}$ given by \eqref{eq:N-graph} to be null hypersurface is equivalent to
\begin{equation}\label{eq:f-u}
\frac{\partial f}{\partial u}=-\frac{1}{2}Af^2+ f K^iD_if+\frac{1}{2}\Gamma D_ifD^if.
\end{equation}
Moreover, if this holds then metric $\g$ on ${\mathcal N}$ is given by
\begin{equation}\label{eq:frak-g}
\g=h_{ij}(x)\bar{e}^i\bar{e}^j,
\end{equation}
and the null direction is spanned by $\bar{\ell}_o$.
\end{lm}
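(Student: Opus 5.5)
The plan is to compute the metric induced on ${\mathcal N}$ by pulling back $g$ through the parametrization $(u,x^i)\mapsto (u,x^i,f(u,x^i))$, using the coframe \eqref{eq:basis}. On ${\mathcal N}$ we have
\begin{equation}
\ef^+=\rd u,\quad \ef^-=\left(\frac{\partial f}{\partial u}+\frac{1}{2}Af^2\right)\rd u+D_if\,\rd x^i,\quad \ef^i=\rd x^i+fK^i\rd u.
\end{equation}
Substituting into $g=2\Gamma\ef^+\ef^-+h_{ij}\ef^i\ef^j$ gives the pulled-back metric as an explicit quadratic form in $(\rd u,\rd x^i)$. The next step is to complete the square in the $\rd x$ directions. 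The cross terms are $2\Gamma D_if\,\rd u\,\rd x^i+2fK_i\,\rd u\,\rd x^i=2\tilde{K}_i\,\rd u\,\rd x^i$, with $\tilde{K}$ as in \eqref{eq:e-K-bar}. This suggests writing
\begin{equation}
\g=h_{ij}\bar{e}^i\bar{e}^j+Q\,\rd u^2,\quad Q=2\Gamma\left(\frac{\partial f}{\partial u}+\frac{1}{2}Af^2\right)+f^2K_iK^i-\tilde{K}_i\tilde{K}^i.
\end{equation}
Expanding $\tilde{K}_i\tilde{K}^i=f^2K_iK^i+2f\Gamma K^iD_if+\Gamma^2D_ifD^if$ then gives
\begin{equation}
Q=2\Gamma\left(\frac{\partial f}{\partial u}+\frac{1}{2}Af^2-fK^iD_if-\frac{1}{2}\Gamma D_ifD^if\right).
\end{equation}

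Next I use that $\{\rd u,\bar{e}^i\}$ is a coframe on ${\mathcal N}$ and $h$ is positive definite. In this coframe the induced metric is block diagonal, $\mathrm{diag}(Q,h)$, so it is degenerate exactly when $Q=0$. Since $\Gamma>0$, $Q=0$ is equivalent to \eqref{eq:f-u}. Otherwise the restriction has signature $(+,+\dots)$ or $(-,+\dots)$, and the hypersurface is spacelike or timelike at that point. So nullness everywhere is equivalent to \eqref{eq:f-u}, and under it $\g=h_{ij}\bar{e}^i\bar{e}^j$, which is \eqref{eq:frak-g}.

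Finally, $\bar{e}^i(\bar{\ell}_o)=-\tilde{K}^i+\tilde{K}^i=0$, so $\bar{\ell}_o$ lies in the kernel of $\g$. It is non-vanishing because its $\partial_u$ component is $1$. The kernel is one-dimensional because $h$ is nondegenerate on the span of the $\bar{e}^i$, so $\bar{\ell}_o$ spans the null direction.

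The only step needing care is the completion of the square, specifically checking that the cross terms combine exactly into $\tilde{K}$. Everything else is bookkeeping.
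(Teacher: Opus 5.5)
Your proof is correct. It differs from the paper's only in how nullness is detected.

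\textbf{The paper's route.} It expands the conormal $\eta=\rd\rho-\rd f$ in the coframe $\ef^\pm,\ef^i$ and imposes $g^{-1}(\eta,\eta)=0$ with $g^{-1}=2\Gamma^{-1}e_+e_-+h^{ij}e_ie_j$. This produces \eqref{eq:f-u} almost immediately. Only afterwards does it pull back $g$, using $\rd\rho=\rd f$ together with the already imposed \eqref{eq:f-u}, to get \eqref{eq:frak-g}; that step is asserted rather than carried out.

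\textbf{Your route.} You pull back $g$ first, without imposing anything, and complete the square. This gives the single identity $\g=h_{ij}\bar e^i\bar e^j+Q\,\rd u^2$. The degeneracy criterion in the coframe $\{\rd u,\bar e^i\}$ then yields the equivalence and the form of $\g$ at the same time. It also shows the causal character when \eqref{eq:f-u} fails: spacelike for $Q>0$, timelike for $Q<0$. The two computations agree, since one checks $g^{-1}(\eta,\eta)=-Q/\Gamma^{2}$.

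\textbf{Comparison.} The paper's route is quicker for the nullness condition alone, because the inverse metric is so simple in this frame. Yours is more self-contained, because the induced-metric formula the paper only asserts is exactly the identity you derive. Your remark that $h$ is nondegenerate on the span of the $\bar e^i$ also makes explicit that the kernel of $\g$ is exactly the line spanned by $\bar\ell_o$, which the paper leaves implicit.
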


\begin{proof}
The condition for ${\mathcal N}$ being null is equivalent to vanishing norm of $\eta=\rd \rho-\rd f$. We can write this form in the basis \eqref{eq:basis}
\begin{align}
\eta&=\left(\rd \rho+\frac{1}{2}A\rho^2\rd u\right)-\left(\frac{\partial f}{\partial u}-\rho K^i\frac{\partial f}{\partial x^i}+\frac{1}{2}A\rho^2\right)\rd u-\frac{\partial f}{\partial x^i}\left(\rd x^i+\rho K^i\rd u\right)=\nonumber\\
&=\ef^--\left(\frac{\partial f}{\partial u}-\rho K^i\frac{\partial f}{\partial x^i}+\frac{1}{2}A\rho^2\right)\ef^+-\frac{\partial f}{\partial x^i}\ef^i.
\end{align}
Using knowledge of scalar product and the fact that $\rho=f$ on ${\mathcal N}$ we obtain equation
\begin{equation}
-2\Gamma^{-1}\left(\frac{\partial f}{\partial u}-f K^i\frac{\partial f}{\partial x^i}+\frac{1}{2}Af^2\right)+h^{ij}\frac{\partial f}{\partial x^i}\frac{\partial f}{\partial x^j}=0,
\end{equation}
that can be written in a nice form \eqref{eq:f-u}.

The induced metric can be easily computed using $\rd \rho=\rd f$ and the formula  \eqref{eq:f-u} giving \eqref{eq:frak-g}. Vector $\bar{\ell}_o=\frac{\partial}{\partial u}-\left(fK^i+\Gamma h^{ij}\frac{\partial f}{\partial x^j}\right)\frac{\partial}{\partial x^i}$ annihilates all forms $\bar{e}^i$ thus it is the null vector for the metric $\g$.
\end{proof}

Consider a covariant tensor $F$ on a null hypersurface ${\mathcal N}$ \eqref{eq:N-graph} whose arbitrary contraction with $\bar{\ell}_o$ vanishes. We call such tensors semi-basic. As forms $\bar{e}^i$ spanned a basis of forms annihilated by $\bar{\ell}_o$, it follows that $F$ can be uniquely written as
\begin{equation}
F=F_{i_1\cdots i_k}(u,x)\bar{e}^{i_1}\otimes\cdots \otimes\bar{e}^{i_k}.
\end{equation}
Moreover $\Psi_u^*(F)=F_{i_1\cdots i_k}(u,x)\rd x^{i_1}\otimes\cdots \otimes\rd x^{i_k}$. In order to determine $F$ it is enough to know its all cross sections of constant $u$. 

Let us notice that $\Lie_{\bar{\ell}_o}F$ is also semi-basic tensor. It is quite straighforward to compute its Lie derivative.

\begin{lm}\label{lm:semi-basic}
Let $F$ be semi-basic covariant tensor on a null hypersurface ${\mathcal N}$ given by equation \eqref{eq:N-graph}. Then
\begin{equation}
\Psi_u^*\left(\Lie_{\bar{\ell}_o}F\right)=\frac{\partial \bar{F}}{\partial u}-\Lie_{\tilde{K}}\bar{F},\quad \bar{F}=\Psi_u^*F.
\end{equation}
Here we regard $\bar{F}$ as $u$ dependent tensor on $S$. 
\end{lm}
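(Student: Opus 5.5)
The plan is to reduce the statement to the standard Lie derivative formula for a tensor written in a moving frame, using coordinates $(u,x^i)$ on ${\mathcal N}$. Write $F=F_{i_1\cdots i_k}(u,x)\,\bar{e}^{i_1}\otimes\cdots\otimes\bar{e}^{i_k}$. By the Leibniz rule,
\begin{equation*}
\Lie_{\bar{\ell}_o}F=\bar{\ell}_o(F_{i_1\cdots i_k})\,\bar{e}^{i_1}\otimes\cdots\otimes\bar{e}^{i_k}+\sum_{m}F_{i_1\cdots i_k}\,\bar{e}^{i_1}\otimes\cdots\otimes\Lie_{\bar{\ell}_o}\bar{e}^{i_m}\otimes\cdots\otimes\bar{e}^{i_k}.
\end{equation*}
The first step is to compute $\Lie_{\bar{\ell}_o}\bar{e}^i$ with Cartan's formula. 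Since $\bar{\ell}_o\lrcorner\bar{e}^i=\tilde{K}^i-\tilde{K}^i=0$, we get $\Lie_{\bar{\ell}_o}\bar{e}^i=\bar{\ell}_o\lrcorner\rd\bar{e}^i$. Next we use
\begin{equation*}
\rd\bar{e}^i=\frac{\partial \tilde{K}^i}{\partial x^j}\rd x^j\wedge\rd u,
\end{equation*}
which gives
\begin{equation*}
\bar{\ell}_o\lrcorner\rd\bar{e}^i=-\tilde{K}^j\frac{\partial\tilde{K}^i}{\partial x^j}\rd u-\frac{\partial \tilde{K}^i}{\partial x^j}\rd x^j=-\frac{\partial \tilde{K}^i}{\partial x^j}\bar{e}^j.
\end{equation*}
This confirms that $\Lie_{\bar{\ell}_o}F$ is semi-basic.

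Pulling back by $\Psi_u$, which sends $\bar{e}^i\mapsto \rd x^i$ and $\rd u\mapsto 0$, we obtain
\begin{equation*}
\Psi_u^*\Lie_{\bar{\ell}_o}F=\Big(\partial_u F_{i_1\cdots i_k}-\tilde{K}^j\partial_jF_{i_1\cdots i_k}-\sum_m F_{i_1\cdots j\cdots i_k}\partial_{i_m}\tilde{K}^j\Big)\rd x^{i_1}\otimes\cdots\otimes\rd x^{i_k}.
\end{equation*}
The bracketed expression is exactly the coordinate form of $\partial_u\bar{F}-\Lie_{\tilde{K}}\bar{F}$ for the $u$-dependent tensor $\bar{F}$ on $S$. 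This works because $\Lie_{\tilde K}$ of a covariant tensor in coordinates is $\tilde K^j\partial_j F_{\dots}+\sum_m F_{\dots j\dots}\partial_{i_m}\tilde K^j$.

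There is no real obstacle here. The only points needing care are the sign and ordering in the wedge contraction for $\rd\bar{e}^i$, and the observation that the $\rd u$ components cancel. The latter is guaranteed anyway because both sides are semi-basic.
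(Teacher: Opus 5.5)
Your proof is correct, but it is organized differently from the paper's.

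The paper splits $\bar{\ell}_o=\frac{\partial}{\partial u}-v$ with $v=\tilde{K}^i\frac{\partial}{\partial x^i}$ and handles the two Lie derivatives separately, without writing out components:
\begin{itemize}
\item $\Lie_{\frac{\partial}{\partial u}}\bar{e}^i=\frac{\partial\tilde{K}^i}{\partial u}\rd u$ is killed by $\Psi_u^*$, so the first term pulls back to $\frac{\partial\bar{F}}{\partial u}$.
\item $v$ is tangent to the slices $u=\const$ and is $\Psi_u$-related to $\tilde{K}$, so naturality of the Lie derivative gives $\Psi_u^*\Lie_vF=\Lie_{\tilde{K}}\bar{F}$.
\end{itemize}
In that route the two pieces need not be semi-basic individually. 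This is harmless, since only their pull-backs enter.

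You instead keep $\bar{\ell}_o$ intact and expand $F$ in the frame $\bar{e}^i$. Cartan's formula then gives $\Lie_{\bar{\ell}_o}\bar{e}^i=-\frac{\partial\tilde{K}^i}{\partial x^j}\bar{e}^j$. Your sign is right, and so is the recombination of the $\rd u$ terms into $\bar{e}^j$. After that, the claim reduces to recognizing the coordinate formula for $\Lie_{\tilde{K}}$ of a covariant tensor.

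The paper's route is shorter and avoids index bookkeeping for rank-$k$ tensors. Yours is fully explicit, and it stays inside the algebra generated by the $\bar{e}^i$ throughout. As a by-product, it shows directly that $\Lie_{\bar{\ell}_o}$ maps the frame into its own span, hence that $\Lie_{\bar{\ell}_o}F$ is semi-basic. The paper only states this fact just before the lemma.
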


\begin{proof}
We will use identity
\begin{equation}
\Lie_{\bar{\ell}_o}F=\Lie_{\frac{\partial}{\partial u}}F-\Lie_{v}F,\quad v=\tilde{K}^i\frac{\partial}{\partial x^i}.
\end{equation}
Let us notice that $\Lie_{\frac{\partial}{\partial u}}\bar{e}^i$ is proportional to $\rd u$, thus
\begin{equation}
\Psi_u^*\left(\Lie_{\frac{\partial}{\partial u}}F\right)=\frac{\partial \bar{F}}{\partial u}.
\end{equation}
On the other hand $v$ is tangent to cross sections of constant $u$ and it is a push-forward of the vector field $\tilde{K}$, hence
\begin{equation}
\Psi_u^*\left(\Lie_{v}F\right)=\Lie_{\tilde{K}}\Psi_u^*F=\Lie_{\tilde{K}}\bar{F}.
\end{equation}
This finishes the proof.
\end{proof}

We can now apply Lemma \ref{lm:semi-basic} to the degenerate metric $\g$ itself. 

\begin{lm}\label{lm:Pij}
Suppose that ${\mathcal N}$ is given by \eqref{eq:N-graph} and \eqref{eq:f-u} is satisfied, then
$\Lie_{\bar{\ell}_o}\g=2P_{ij}\bar{e}^i\bar{e}^j$ where
\begin{equation}\label{eq:Pij}
P_{ij}:=-D_{(i}\tilde{K}_{j)},
\end{equation}
where $\tilde{K}$ is regarded as $u$-dependent vector field on $S$.
\end{lm}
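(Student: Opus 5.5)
The plan is to apply Lemma~\ref{lm:semi-basic} to $F=\g$, which is semi-basic by \eqref{eq:frak-g}. Its pull-back to a cross section is $\bar F=\Psi_u^*\g=h_{ij}(x)\rd x^i\rd x^j=h$. Hence we get
\begin{equation}
\Psi_u^*\left(\Lie_{\bar{\ell}_o}\g\right)=\frac{\partial h}{\partial u}-\Lie_{\tilde{K}}h=-\Lie_{\tilde{K}}h,
\end{equation}
since $h$ does not depend on $u$.

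Next we use the standard identity $(\Lie_{\tilde K}h)_{ij}=D_i\tilde K_j+D_j\tilde K_i=2D_{(i}\tilde K_{j)}$. Here $\tilde K$ is treated, for each fixed $u$, as a vector field on $S$, and its index is lowered with $h$. This gives
\begin{equation}
\Psi_u^*\left(\Lie_{\bar{\ell}_o}\g\right)=-2D_{(i}\tilde K_{j)}\rd x^i\rd x^j=2P_{ij}\rd x^i\rd x^j .
\end{equation}

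Finally we lift this back to ${\mathcal N}$. As noted before Lemma~\ref{lm:semi-basic}, $\Lie_{\bar{\ell}_o}\g$ is semi-basic: $\g$ is semi-basic and $\bar\ell_o$ annihilates it. A semi-basic tensor is uniquely determined by its components in the basis $\bar e^i$, and those components coincide with the components of its pull-backs to the cross sections $u=\const$. Therefore $\Lie_{\bar{\ell}_o}\g=2P_{ij}\bar e^i\bar e^j$.

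I do not expect a genuine obstacle. The only point needing care is the semi-basic property of $\Lie_{\bar\ell_o}\g$, which uses that $\bar\ell_o$ spans the null direction (so $\bar\ell_o\lrcorner\g=0$, which holds because \eqref{eq:f-u} is assumed) together with $\Lie_{\bar\ell_o}\bar\ell_o=0$.
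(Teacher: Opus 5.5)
Your proposal is correct and follows essentially the same route as the paper: apply Lemma~\ref{lm:semi-basic} to the semi-basic tensor $\g$, use $\Psi_u^*\g=h$ (which is $u$-independent) and $\Lie_{\tilde K}h=2D_{(i}\tilde K_{j)}$, then recover $\Lie_{\bar\ell_o}\g$ from its cross-sections via the semi-basic property. Your extra remark justifying why $\Lie_{\bar\ell_o}\g$ is semi-basic (via $\bar\ell_o\lrcorner\g=0$ and $[\bar\ell_o,\bar\ell_o]=0$) simply makes explicit a step the paper states without comment.
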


\begin{proof}
Tensor $\Lie_{\bar{\ell}_o}\g$ is annihilated by $\bar{\ell}_o$ thus it can be written in a form $\Lie_{\bar{\ell}_o}\g=2P_{ij}\bar{e}^i\bar{e}^j$ for some $u$ dependent symmetric tensor $P$. 
Lemma \ref{lm:semi-basic} shows using $\Psi_u^*\g=h$ that
\begin{equation}
2P_{ij}=\frac{\partial h_{ij}}{\partial u}-\Lie_{\tilde{K}}h_{ij}=-2D_{(i}\tilde{K}_{j)}.
\end{equation}
This shows \eqref{eq:Pij}.
\end{proof}

\begin{proof}[Proof of Proposition \ref{prop:1} and Point \ref{prop2:1} of Proposition \ref{prop:2}]
The expansion with respect to the null vector field $\bar{\ell}_o$ is given by
\begin{equation}
\theta=h^{ij}P_{ij}=-(D^i(\Gamma D_if)+D^i(K_if)).
\end{equation}
It can be written as $\theta={\mathbb L}(\Gamma f)$ where $\LAMS$ is the stability operator for ${\mathcal H}$ \eqref{eq:stability-L}.
Let us remind that $\Gamma$ is a principal eigenvector of this operator if  coordinate system satisfies condition $D_i K^i=0$.

In particular Lemma \ref{lm: principal} shows that if for a given $u_o$
\begin{equation}
\theta|_{u=u_o}={\mathbb L}(\Gamma f|_{u=u_o})\geq 0,
\end{equation}
then $\theta|_{u=u_o}={\mathbb L}(\Gamma f|_{u=u_o})=0$ and moreover $\Gamma f|_{u=u_o}$ is proportional to $\Gamma$ thus  $f|_{u=u_o}$ is a constant. The same holds if $\theta|_{u=u_o}\leq 0$. Under assumption in Proposition \ref{prop:1} expansion vanishes identically and  we can write $f(u,x^i)=\phi(u)$ for some function $\phi$.

Equation \eqref{eq:f-u} simplifies considerably:
\begin{equation}
\phi'(u)=-\frac{1}{2}A\phi(u)^2.
\end{equation}
We assumed that $\phi(u)$ is not identically zero, thus
solutions exists if and only if $A$ is a constant and then
\begin{equation}
\frac{1}{\phi(u)}=\frac{1}{2}Au+s,
\end{equation}
where $s$ is an arbitrary constant. We conclude that $\phi=\rho_s$ (defined by \eqref{eq:rho_s}) and the hypersurface is a segment of ${\mathcal H}_s$ defined in \eqref{eq:Hs-graph}. 
\end{proof}

\begin{remark}
The formula for expansion $\theta|_{u=u_o}={\mathbb L}(\Gamma f|_{u=u_o})$ is linear in $f$. It is a necessary consequence of boost and shift symmetry and can be proven directly without computing actual form of the operator. From linearity it follows that ${\mathbb L}(\Gamma f|_{u=u_o})$ is in fact an infinitesimal variation of the expansion of the Killing horizon ${\mathcal H}$ under variation $f\frac{\partial}{\partial \rho}$. This explains occurence of the same operator ${\mathbb L}$ in analysis of this variation.
\end{remark}

\begin{proof}[Proof of Proposition \ref{prop:2} point \ref{prop2:2}]
Using results of Proposition \ref{prop:2} point \ref{prop2:1}, we can assume that expansion vanishes and $A=\const$. We need to show that condition $\Lie_{\bar{\ell}_o}\g=0$ is equivalent to $K$ being a Killing vector. Vanishing of $\Lie_{\bar{\ell}_o}\g$ is equivalent to vanishing of $P_{ij}$. Using $f(u,x^i)=\rho_s(u)$ for some $s$,
\begin{equation}
P_{ij}=-D_{(i}(\rho(u)K_{j)})=-\rho_s(u)D_{(i}K_{j)}.
\end{equation} 
Function $\rho_s(u)$ is non-vanishing, thus $K$ is a Killing vector field if and only if $\Lie_{\bar{\ell}_o}\g=0$. 
\end{proof}

In order to analyze further properties of ${\mathcal H}_s$ we need to know the form $\omega^{(\bar{\ell}_C)}$ that is the natural rotation form associated to foliation by these hypersurfaces.
Suppose that $A=\const$ and $K$ is a Killing vector field of metric $h$ (necessary and sufficient conditions for ${\mathcal H}_s$ to be totally geodesic null hypersurface). The form $\omega^{(\bar{\ell}_C)}$ is semi-basic ($\bar{\ell}_C\lrcorner \omega^{(\bar{\ell}_C)}=0$) thus
\begin{equation}
\omega^{(\bar{\ell}_C)}=w_i\bar{e}^i,
\end{equation}
for some $u$ dependent form $w=w_i\rd x^i$ on $S$. This form can be computed from a definition of rotation form by the formula $\nabla_{e_i}\ell_C=w_i\ell_C$. Indeed, using  $\ell_C=-\frac{1}{\Gamma\rho^2}e_+$,
\begin{equation}
\nabla_{e_i}\ell_C=-\frac{1}{2\Gamma}\left(K_i+D_i\Gamma\right)\ell_C.
\end{equation}
Let us also remark that in this case $\bar{\ell}_o$ is proportional to $\bar{\ell}_C$.

\begin{lm}\label{lm:omega-1}
Suppose that $A=\const$ and $K$ is a Killing vector field of metric $h$ then the rotation $2$-form $\Omega$ on the totally geodesic null hypersurface ${\mathcal H}_s$ satisfies $\bar{\ell}_o\lrcorner \Omega=F_i\bar{e}^i$ where $u$ dependent form $F=F_i\rd x^i$ on $S$ is given by
\begin{equation}
F=\rho_s(u) \Lie_K\left[\frac{1}{2\Gamma}\left(K_i+D_i\Gamma\right)\rd x^i\right].
\end{equation}
\end{lm}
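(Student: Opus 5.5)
We compute $\bar{\ell}_C\lrcorner \Omega$ through the Cartan-formula identity from Section~\ref{sec:horizon-theory}. Since $\bar{\ell}_C\lrcorner\omega^{(\bar{\ell}_C)}=0$ for the foliation normal, we have $\bar{\ell}_C\lrcorner\Omega=\Lie_{\bar{\ell}_C}\omega^{(\bar{\ell}_C)}$. Under the hypotheses, $\bar{\ell}_o$ is proportional to $\bar{\ell}_C$. Concretely, $\ell_C=-\frac{1}{\Gamma\rho^2}e_+$, while on ${\mathcal H}_s$ the push-forward of $\bar{\ell}_o=\partial_u-\rho_s K^i\partial_i$ is exactly $e_+$. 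This holds because $\tilde K=\rho_s K$ when $f=\rho_s(u)$, and the $\partial_\rho$ component is $\rho_s'=-\frac12 A\rho_s^2$. So $\bar{\ell}_o=-\Gamma\rho_s^2\,\bar{\ell}_C$. By the transformation rule $\omega^{(g\bar\ell)}=\omega^{(\bar\ell)}+\rd\ln|g|$, the $2$-form $\Omega$ is unchanged. By linearity of contraction, $\bar{\ell}_o\lrcorner\Omega=-\Gamma\rho_s^2\,\bar{\ell}_C\lrcorner\Omega=-\Gamma\rho_s^2\,\Lie_{\bar{\ell}_C}\omega^{(\bar{\ell}_C)}$. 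It is cleaner, however, to compute $\bar{\ell}_o\lrcorner\Omega$ directly with $\omega':=\omega^{(\bar{\ell}_o)}$, by the identity $\bar{\ell}_o\lrcorner\Omega=\Lie_{\bar{\ell}_o}\omega'-\rd(\bar{\ell}_o\lrcorner\omega')$.

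\textbf{Step 1: the semi-basic part $w_i\bar e^i$.} From $\nabla_{e_i}\ell_C=w_i\ell_C$ with $w_i=-\frac{1}{2\Gamma}(K_i+D_i\Gamma)$, we get $\omega^{(\bar{\ell}_C)}=w_i\bar e^i$. Here $w$ is in fact $u$-independent as a form on $S$, because $\Gamma$ and $K$ do not depend on $u$ and on ${\mathcal H}_s$ the frame $e_i$ coincides with the push-forward of $\partial/\partial x^i$ (corrected along $\bar e^i$). Then $\omega':=\omega^{(\bar{\ell}_o)}=\omega^{(\bar{\ell}_C)}+\rd\ln(\Gamma\rho_s^2)$. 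The two last terms are exact, so they drop out after taking $\rd$. We therefore work with $\bar{\ell}_C\lrcorner\Omega=\Lie_{\bar\ell_C}(w_i\bar e^i)$, since $\bar\ell_C\lrcorner(w_i\bar e^i)=0$.

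\textbf{Step 2: the Lie derivative.} Write $\bar\ell_C=-\frac{1}{\Gamma\rho_s^2}\bar\ell_o$. For a semi-basic $1$-form $\alpha$ and a function $g$, $\Lie_{g\bar\ell_o}\alpha=g\Lie_{\bar\ell_o}\alpha+(\bar\ell_o\lrcorner\alpha)\rd g=g\Lie_{\bar\ell_o}\alpha$. Hence $\bar\ell_C\lrcorner\Omega=-\frac{1}{\Gamma\rho_s^2}\Lie_{\bar\ell_o}(w_i\bar e^i)$. By Lemma~\ref{lm:semi-basic} with $\bar F=w$ independent of $u$ and $\tilde K=\rho_s(u)K$, the pull-back is $\Psi_u^*\Lie_{\bar\ell_o}(w_i\bar e^i)=-\rho_s(u)\Lie_K w=\rho_s\Lie_K\big[\frac{1}{2\Gamma}(K_i+D_i\Gamma)\rd x^i\big]$. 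Finally, $\bar\ell_o\lrcorner\Omega=-\Gamma\rho_s^2\,\bar\ell_C\lrcorner\Omega=\Lie_{\bar\ell_o}(w_i\bar e^i)$, which is the claimed $F_i\bar e^i$ with $F=\rho_s\Lie_K[\frac{1}{2\Gamma}(K+\rd\Gamma)]$.

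\textbf{Main obstacle.} The delicate point is Step 1, namely justifying that the rotation form is exactly $w_i\bar e^i$ with $w$ constant in $u$ on ${\mathcal H}_s$. This requires checking that $\nabla_{X}\ell_C$ for $X$ tangent to ${\mathcal H}_s$ equals $w(X)\ell_C$. The check uses the listed connection coefficients together with $A=\const$ and $D_{(i}K_{j)}=0$: the terms $\rho h^{jk}D_{(j}K_{i)}$ vanish, and $\nabla_{e_+}e_+$ is proportional to $e_+$ when $\partial A=0$. One must also check that the tangent vectors to ${\mathcal H}_s$ dual to $\bar e^i$ are $e_i$ plus a multiple of $e_+$, so that $\omega(e_i)=w_i$. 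The sign bookkeeping in the rescaling between $\bar\ell_o$ and $\bar\ell_C$ also needs care, though it cancels in the end.
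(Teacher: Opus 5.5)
Your proof is correct and follows the paper's route: you identify $\omega^{(\bar{\ell}_C)}=w_i\bar{e}^i$ with the $u$-independent form $w=-\frac{1}{2\Gamma}(K_i+D_i\Gamma)\rd x^i$, use $\bar{\ell}_o\lrcorner\Omega=\Lie_{\bar{\ell}_o}\omega^{(\bar{\ell}_C)}$, and apply Lemma~\ref{lm:semi-basic} with $\tilde{K}=\rho_s(u)K$. The only difference is that you re-derive the rescaling step yourself, via $\bar{\ell}_o=-\Gamma\rho_s^2\bar{\ell}_C$ and the semi-basic property of $\omega^{(\bar{\ell}_C)}$, whereas the paper quotes the identity $\Lie_{f\bar{\ell}}\omega^{(\bar{\ell})}=f\bar{\ell}\lrcorner\Omega$ from Section~\ref{sec:horizon-theory}; your aside about computing with $\omega^{(\bar{\ell}_o)}$ is never used and can be dropped.
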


\begin{proof}
Let us remind that $\omega^{(\bar{\ell}_C)}$ is semi-basic, hence Lemma \ref{lm:semi-basic} shows that $\Lie_{\bar{\ell}_o}\omega^{(\bar{\ell}_C)}=F_i\bar{e}^i$ for some $u$ dependent form $F$ on $S$. Moreover,  
\begin{equation}
w=\Psi_u^*\left(\omega^{(\bar{\ell}_C)}\right)=-\frac{1}{2\Gamma}\left(K_i+D_i\Gamma\right)\rd x^i,
\end{equation}
and using fact that $\tilde{K}=\rho_s(u)K$, Lemma \ref{lm:semi-basic} shows that
\begin{equation}
F=\frac{\partial w}{\partial u}-\Lie_{\tilde{K}}w=\rho_s(u) \Lie_K\left[\frac{1}{2\Gamma}\left(K_i+D_i\Gamma\right)\rd x^i\right].
\end{equation}
As a leaf of foliation $\bar{\ell}_o\lrcorner \Omega=\Lie_{\bar{\ell}_o}\omega^{(\bar{\ell}_C)}$ that concludes the proof.
\end{proof}

\begin{proof}[Proof of Proposition \ref{prop:2} point \ref{prop2:3}]
Suppose that a segment of ${\mathcal H}_s$ is a non-expanding horizon. Lemma \ref{lm:omega-1} shows that a form $Z=Z_i\rd x^i$, where $Z_i=\frac{1}{\Gamma}\left(K_i+D_i\Gamma\right)$ satisfies $\Lie_KZ=0$. Consider an operator
\begin{equation}
{\mathbb L}'(\psi)=-D^iD_i\psi+D^i(Z_i\psi).
\end{equation}
It satisfies ${\mathbb L}'(\Gamma)=0$ so by Lemma \ref{lm: principal}, $\Gamma$ is a principal eigenvector. Moreover, $\Lie_Kh=0=\Lie_KZ$, so again by Lemma \ref{lm: principal}, $\Lie_K\Gamma=0$. 

We now prove the implication in the other direction. Suppose that additionally to $A=\const$ and $\Lie_Kh=0$ also $\Lie_K\Gamma=0$ then as $K$ commutes with itself 
we can conclude that $\Lie_KZ=0$. By Lemma  
\ref{lm:omega-1} ${\mathcal H}_s$ is a non-expanding horizon for every $s\in \R$.
\end{proof}

We can now finish the proof of Proposition \ref{prop:2}. We will not use local isometries as elucidated in the Introduction, but instead we will explicitely construct a Killing vector field for which given ${\mathcal H}_s$ is an extremal Killing horizon. The last part of Proposition \ref{prop:2} follows directly from the lemma:

\begin{lm}\label{lm:prop2-last}
Suppose that NHG spacetime $M$ satisfies $A=\const$, $\Lie_Kh=0$ and $\Lie_K\Gamma=0$. Then ${\mathcal H}_s$ for $s\in \R$ is an extremal Killing horizon for the Killing vector field
\begin{equation}
V_{(s)}=\frac{1}{2}AY+AsB+s^2 U-sK^i\frac{\partial}{\partial x^i}.
\end{equation}
\end{lm}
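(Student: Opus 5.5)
The plan is to check the two defining properties of a Killing horizon directly: that $V_{(s)}$ is a Killing vector field of $g$, and that on ${\mathcal H}_s$ it is a non-vanishing null normal with vanishing acceleration.

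\emph{Killing property.} First I will establish that $V_{(s)}$ is Killing. Since it is a constant-coefficient combination of $U$, $B$, $Y$ and $K^i\frac{\partial}{\partial x^i}$, it suffices to show each of these is Killing under the hypotheses $A=\const$, $\Lie_Kh=0$, $\Lie_K\Gamma=0$.
\begin{itemize}
\item $U$ and $B$ are Killing for every NHG.
\item For $K^i\frac{\partial}{\partial x^i}$, the metric \eqref{eq:metric} is built only from $h$, $\Gamma$, $A$, $K$ and the coordinate forms $\rd u$, $\rd\rho$. $K$ commutes with itself and annihilates $\rd u$ and $\rd\rho$. Hence $\Lie_K g=0$ follows from the three hypotheses.
\item For $Y$, I will compute $\Lie_Y$ of each basis form $\ef^+,\ef^-,\ef^i$ in \eqref{eq:basis}, using $A=\const$, $\Lie_K h=0$ and $\Lie_K\Gamma=0$. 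I expect $\Lie_Y\ef^+=Au\,\ef^+$ and $\Lie_Y\ef^-=-Au\,\ef^-+(\ldots)\ef^+$, with the extra terms cancelling in $2\Gamma\ef^+\ef^-+h_{ij}\ef^i\ef^j$.
\end{itemize}
The $Y$ computation is the step I expect to be the main obstacle. The delicate terms are those produced by $-uK$ acting on $h_{ij}\ef^i\ef^j$ and on $\Gamma$, which is exactly where $\Lie_Kh=0$ and $\Lie_K\Gamma=0$ must enter. The $\rho$-dependent terms from $(1-A\rho u)\frac{\partial}{\partial\rho}$ must cancel against $\rd u$-components, and this is where constancy of $A$ is used. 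If this becomes messy, a fallback is to invoke the statement recalled in the Introduction (from \cite{Colling:2025Symmetries}) that under these conditions $Y$ and $K$ preserve $g$.

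\emph{Normality.} Next I will write $V_{(s)}$ explicitly in coordinates:
\begin{equation*}
V_{(s)}=\Big(s+\tfrac12Au\Big)^2\frac{\partial}{\partial u}+\Big(\tfrac12A-A\rho\big(s+\tfrac12Au\big)\Big)\frac{\partial}{\partial\rho}-\Big(s+\tfrac12Au\Big)K^i\frac{\partial}{\partial x^i}.
\end{equation*}
On ${\mathcal H}_s$ we have $s+\frac12Au=\rho^{-1}$, so the $\frac{\partial}{\partial\rho}$-coefficient reduces to $-\frac12A$. Comparing with $e_+$ then gives
\begin{equation*}
V_{(s)}\hat{=}\rho^{-2}e_+=-\Gamma\,\ell_C .
\end{equation*}
This is non-vanishing, and it is normal (hence tangent) to ${\mathcal H}_s$ because $\ell_C$ is. So ${\mathcal H}_s$ is a Killing horizon for $V_{(s)}$.

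\emph{Extremality.} Finally, $\ell_C$ is a gradient, so by the argument in Subsection~\ref{sec:horizon-theory} it satisfies $\nabla_{\ell_C}\ell_C=0$. For $V=f\ell_C$ with $f=-\Gamma$ one gets $\nabla_V V=f\,\ell_C(f)\,\ell_C$, so the acceleration is $\kappa^{(\bar V)}=\ell_C(f)$, which is proportional to $e_+(\Gamma)$. Since $\Gamma$ depends only on $x$,
\begin{equation*}
e_+(\Gamma)=-\rho\,K(\Gamma)=0
\end{equation*}
by $\Lie_K\Gamma=0$. Hence $\kappa^{(\bar V)}=0$, and ${\mathcal H}_s$ is an extremal Killing horizon for $V_{(s)}$.
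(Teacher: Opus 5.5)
Your proposal is correct and follows the paper's proof essentially step by step. You rewrite $V_{(s)}$ in coordinates, use $s+\tfrac12Au=\rho^{-1}$ on ${\mathcal H}_s$ to get $V_{(s)}\hat{=}\rho^{-2}e_+=-\Gamma\ell_C$, and obtain $\kappa=0$ from $\nabla_{\ell_C}\ell_C=0$ together with $e_+(\Gamma)=-\rho\,\Lie_K\Gamma=0$. Your one addition is the explicit check that $Y$ and $K^i\frac{\partial}{\partial x^i}$ are Killing, which the paper takes from the Introduction (via \cite{Colling:2025Symmetries}). That check is easier than you fear: with $A=\const$ one gets exactly $\Lie_Y\ef^+=Au\,\ef^+$, $\Lie_Y\ef^-=-Au\,\ef^-$ and $\Lie_Y\ef^i=-u\,\partial_jK^i\,\ef^j$, so $\Lie_Yg=-2u\,K(\Gamma)\,\ef^+\ef^--u\,(\Lie_Kh)_{ij}\ef^i\ef^j=0$.
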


\begin{remark}
If $A=0$ and $s=0$ then $V_{(0)}=0$ as well, but in this situation ${\mathcal H}_s$ is an empty set. 
\end{remark}

\begin{proof}
On the hypersurface ${\mathcal H}_s$, $\frac{1}{\rho}=\frac{1}{2}Au+s$ thus
\begin{equation}
\frac{1}{\rho^2}e_+
\hat{=}\left(\frac{1}{2}Au+s\right)^2\frac{\partial}{\partial u}-\left(\frac{1}{2}Au+s\right)K^i\frac{\partial}{\partial x^i}-\frac{1}{2}A\frac{\partial}{\partial \rho}.
\end{equation}
On the other hand, direct computation allows to express right hand side using Killing vector fields $U,B$ and $Y$
\begin{align}
&\left(\frac{1}{2}Au+s\right)^2\frac{\partial}{\partial u}-\left(\frac{1}{2}Au+s\right)K^i\frac{\partial}{\partial x^i}-\frac{1}{2}A\frac{\partial}{\partial \rho}=\nonumber\\
&=\frac{1}{2}AY+AsB+s^2 U-sK^i\frac{\partial}{\partial x^i}-A\rho\left(\frac{1}{\rho}-\frac{1}{2}Au-s\right)\frac{\partial}{\partial \rho}.
\end{align}
As result $\frac{1}{\rho^2}e_+\hat{=}V_{(s)}$ and the null hypersurface ${\mathcal H}_s$ is a Killing horizon for the Killing vector field $V_{(s)}$. In order to keep our notation consistent, we introduce a null vector field $\bar{V}_{(s)}$ on ${\mathcal H}_s$, which push-forward to $V_{(s)}$. In order to compute $\kappa^{(\bar{V}_{(s)})}$ we notice that $V_{(s)}\hat{=}-\Gamma \ell_C$,
\begin{equation}
V_{(s)}^\mu \nabla_\mu V_{(s)}^\nu\hat{=}\Gamma \ell_C^\mu\nabla_\mu\left(\Gamma \ell_C^\nu\right)=\Gamma \left(\ell_C^\mu\nabla_\mu \Gamma\right)\ell_c^\nu,
\end{equation}
where we used fact that as natural normal for the leaf of foliation $\ell_C^\mu\nabla_\mu\ell_C^\nu=0$. However,
\begin{equation}
\ell_C^\mu\nabla_\mu \Gamma=-\frac{1}{\rho^2}\left(\frac{\partial}{\partial u}-\rho K^i\frac{\partial}{\partial x^i}-\frac{1}{2}A\rho^2\frac{\partial}{\partial \rho}\right)\Gamma=0,
\end{equation}
because $\Lie_K\Gamma=0$. Consequently, the acceleration vanishes $\kappa^{(\bar{V}_{(s)})}=0$ and the Killing horizon is extremal.
\end{proof}

Let us notice that if $A\not=0$ then $V_{(s)}$ is not complete vector field. 
This can be remedied by considering an extension of the NHG spacetime (see \cite{Colling:2025Symmetries}) in which the vector fields $U,B,Y$ and $K^i\frac{\partial}{\partial x^i}$ can be integrated into a group action.

Finally,  if $A=\const$ and $\Lie_Kh=0=\Lie_K\Gamma$ then one can check that on ${\mathcal H}_s'$ (see (\ref{eq:H's-graph}))
\begin{equation}
e_-\hat{=}Y-AsB+sK^i\frac{\partial}{\partial x^i}+\frac{1}{2}As^2U.
\end{equation}
As $\nabla_{e_-}e_-=0$ it follows that ${\mathcal H}_s'$ are also extremal Killing horizons for $s\in \R$.

\section{Null convergence condition}
\label{sec:null-energy}

We will start with a simple observation:

\begin{lm}\label{lm:hypersurface}
Let $M$ be arbitrary NHG, $u_0\in \R$ and $f_0\in C^\infty(S)$. Then there exists an open interval $I$ including $u_0$ and a function $f\in C^\infty(I\times S)$ such that
\begin{enumerate}
\item $f(u_0,x)=f_0(x)$ for every $x\in S$
\item Hypersurface \eqref{eq:N-graph} ${\mathcal N}=\{(u,x,\rho)\in M\colon \rho=f(u,x),\ u\in I\}$ is null.
\end{enumerate}
\end{lm}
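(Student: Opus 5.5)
We prove the lemma by reducing it, via the previous lemma, to a local existence statement for a first order Hamilton--Jacobi equation on the compact manifold $S$. By that lemma, the graph \eqref{eq:N-graph} is null exactly when $f$ solves \eqref{eq:f-u}, that is
\begin{equation*}
\frac{\partial f}{\partial u}=H(x,f,Df),\qquad H(x,z,p):=-\frac{1}{2}A(x)z^2+zK^i(x)p_i+\frac{1}{2}\Gamma(x)h^{ij}(x)p_ip_j .
\end{equation*}
The lemma therefore asks for a smooth solution on $I\times S$ with $f(u_0,\cdot)=f_0$, for some open interval $I\ni u_0$. The right hand side is smooth in $(x,z,p)$ and independent of $u$. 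The equation is non-characteristic with respect to the initial slice $u=u_0$, because the derivative $\partial_u f$ appears solved for.

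We will construct the solution by the method of characteristics on the cotangent bundle. Set $\Phi(u,x,z,p)=\partial_uf-H$. The characteristic system, with parameter $u$, reads
\begin{align*}
\frac{dx^i}{du}&=-\frac{\partial H}{\partial p_i},\qquad
\frac{dz}{du}=H-p_i\frac{\partial H}{\partial p_i},\\
\frac{dp_i}{du}&=\frac{\partial H}{\partial x^i}+p_i\frac{\partial H}{\partial z}.
\end{align*}
We take initial data $x(u_0)=y$, $z(u_0)=f_0(y)$ and $p(u_0)=\rd f_0(y)$, for $y\in S$. The vector field defining this system is smooth on $T^*S\times\R$. The initial data form a compact set, so we can choose a compact neighbourhood of it. On that neighbourhood the flow exists for $|u-u_0|<\epsilon_1$, with a uniform $\epsilon_1>0$.

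The key step is to show that $y\mapsto x(u,y)$ is a diffeomorphism of $S$ for $u$ close to $u_0$. At $u=u_0$ this map is the identity. Smooth dependence on parameters then shows that, for $|u-u_0|<\epsilon_2$, the map is $C^1$-close to the identity, uniformly on the compact manifold $S$. A $C^1$-small perturbation of the identity on a compact manifold is a diffeomorphism: it is a local diffeomorphism, hence a covering map, and being homotopic to the identity it has degree one, hence it is a diffeomorphism. This is the step where compactness of $S$ is essential, and I expect it to be the main technical point.

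We then define $f(u,x(u,y)):=z(u,y)$ on $I=(u_0-\epsilon,u_0+\epsilon)$. The standard argument for characteristics finishes the proof. The strip condition $dz=p\,dx$ holds at $u=u_0$ and is propagated by the flow. It gives $p=Df$ along the characteristics. Since $H$ does not depend on $u$, $H(x,z,p)$ is constant along characteristics. Combining these two facts with the equation for $dz/du$ gives $\partial_uf=H(x,f,Df)$. Smoothness of $f$ follows from smoothness of the flow and of the inverse diffeomorphism.

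As an alternative to the characteristics computation, one can quote the local Cauchy theorem for fully nonlinear first order PDE in coordinate charts, and patch the local solutions using their uniqueness. Compactness of $S$ again supplies a uniform interval $I$.
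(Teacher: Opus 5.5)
Your proposal is correct and follows essentially the paper's route: existence by the method of characteristics from the non-characteristic slice $u=u_0$. The paper verifies non-characteristicity geometrically through $\ell\lrcorner\rd u=\Gamma^{-1}\neq 0$, which is the same as your observation that $\partial_u f$ appears solved for, and your proposal adds the global-in-$S$ compactness argument that the paper leaves implicit.

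One slip: $H$ is \emph{not} conserved along your characteristics, since it depends on $z$; your system gives $\frac{d}{du}H(x,z,p)=H\,\partial_zH$. You do not need this fact, however. Once $p=Df$, the chain rule gives $\partial_uf=\frac{dz}{du}-p_i\frac{dx^i}{du}=H$ directly.

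A smaller point: if $S$ is non-orientable, the degree count does not by itself give one sheet. Replace it by the observation that a covering map homotopic to the identity is surjective on $\pi_1$, and hence one-sheeted.
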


\begin{proof}
The existence of $f$ satisfying \eqref{eq:f-u} follows from existence of solution to first order PDE by the method of characteristics with initial data on $\{u_0\}\times S$ being $f_0$. In order to prove that this surface is non-characteristic we need to compute $\ell\lrcorner \rd u$ where $\ell\lrcorner g=\rd \rho-\rd f$. However, as $k\lrcorner g=\rd u$ we obtain
\begin{equation}
\ell\lrcorner \rd u=k\lrcorner (\rd \rho-\rd f)=\Gamma^{-1}\frac{\partial}{\partial \rho}\lrcorner (\rd \rho-\rd f)=\Gamma^{-1}\not=0,
\end{equation}
hence the surface of constant $u$ is not-characteristic and the solution exists for some open interval $I$.
\end{proof}

We can now prove Proposition \ref{prop:3}. Suppose that null convergence condition is satisfied and $D_iK^i=0$.
Let $u_0,\rho_0$ be such that $\rho_0\not=0$. Consider ${\mathcal N}$ from Lemma \ref{lm:hypersurface} with $f_0(x)=\rho_0$. Lemma \ref{lm:Pij} shows that the expansion $\theta=h^{ij}P_{ij}$ at $u_0$ vanishes. By Raychadhuri equation and null convergence condition the expansion is non-increasing thus ${\mathcal N}_+={\mathcal N}\cap \{u>u_0\}$ is a null hypersurface with non-positive expansion. However, from Proposition \ref{prop:1},  the expansion vanishes identically on ${\mathcal N}_+$ and moreover ${\mathcal N}_+$ coincides with a segment of ${\mathcal H}_s$. Null hypersurfaces with vanishing expansion are non-expanding horizons in the spacetimes satisfying null convergence condition (Proposition \ref{prop:non-expanding}). Proposition \ref{prop:2} shows now
\begin{equation}
\Lie_Kh=0,\quad A=\const,\quad \Lie_K\Gamma=0.
\end{equation}
This finishes the proof of Proposition \ref{prop:3} and consequently also of Theorem \ref{thm:1}.

\section{Dunajski-Lucietti identity}\label{sec:identity}

We will now derive a version of Dunajski-Lucietti identity for a NHG spacetime satisfying null convergence condition. Our identity is very similar to the one obtained by Colling \cite{Colling:2025Symmetries} but differs slightly. In particular, our identity allows to directly show that $A$ is constant. Let us first introduce two objects on $S$
\begin{equation}
\gamma=R_{\mu\nu}{e_+}^\mu{e_+}^\nu|_{u=0,\rho=1},\quad \beta_i=R_{\mu\nu}{e_i}^\mu {e_+}^\nu|_{u=0,\rho=1}.
\end{equation}
Function $\gamma$ was introduced in \cite{Colling:2025Symmetries}. If null convergence condition holds then $\gamma\geq 0$ as vector field $e_+$ is null. Moreover, Lemma \ref{lm:null} shows that under this assumptions if $\gamma=0$ then $\beta=\beta_i\rd x^i$ vanishes.

\begin{prop}\label{prop:Ray}
Let $M$ be NHG spacetime then
\begin{equation}
\frac{1}{2}\LAMS(\Gamma A)+\left(A+\Lie_K\ln \Gamma\right) D_iK^i-\Lie_K(D_iK^i)=D_{(i}K_{j)}D^{(i}K^{j)}+\gamma.
\end{equation}
Moreover, if $A=\const$ and $\Lie_Kh=0$ then
\begin{equation}
\Lie_KZ=2\beta,
\end{equation}
where $Z=\frac{1}{\Gamma}\left(K_i+D_i\Gamma\right)\rd x^i$.
\end{prop}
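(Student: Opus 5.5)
The plan is to obtain both identities from the Raychaudhuri equation and from the identity ${\mathfrak f}^{(\bar{\ell})}=\bar{\ell}\lrcorner\Omega$, applied to the null hypersurfaces of Section~\ref{sec:hyper}. We use the boost and shift symmetries to reduce everything to the point $u=0$, $\rho=1$.

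For the first identity I take the family ${\mathcal N}$ from Lemma~\ref{lm:hypersurface} with $u_0=0$ and $f_0\equiv 1$. On this hypersurface, and at $u=0$, we have $\tilde K=K$. By Lemma~\ref{lm:Pij}, $P_{ij}=-D_{(i}K_{j)}$ and $\theta=-D_iK^i$ there. No assumption $D_iK^i=0$ is made, so this need not vanish.

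I then write the Raychaudhuri equation for $\bar{\ell}_o$:
\begin{equation}
\Lie_{\bar{\ell}_o}\theta=\kappa\theta-P_{ij}P^{ij}-R_{\mu\nu}\ell_o^\mu\ell_o^\nu .
\end{equation}
This uses the full trace form $P_{ij}P^{ij}=\theta^2/(d-2)+|\sigma|^2$. The term $P_{ij}P^{ij}$ then gives $D_{(i}K_{j)}D^{(i}K^{j)}$. At $\rho=f=1$ the push-forward of $\bar{\ell}_o$ equals $e_+ - \Gamma D^if\,e_i$ plus a multiple of $e_-$ fixed by tangency. Since $D_if=0$ at $u=0$, and the $e_-$-component is $\partial_uf+\frac12 Af^2-fK(f)=0$ by \eqref{eq:f-u}, we get $\ell_o=e_+$ at $u=0$. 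Hence the Ricci term is exactly $\gamma$.

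The left-hand side requires $\partial_u\theta$ at $u=0$, which means computing $\partial_uf$ and $\partial_u D_if$ from \eqref{eq:f-u}. At $u=0$ this gives $\partial_uf=-\frac12 A$. Then $\theta=\LAMS(\Gamma f)$ gives $\partial_u\theta=\LAMS(\Gamma\partial_uf)=-\frac12\LAMS(\Gamma A)$. The second part of $\Lie_{\bar{\ell}_o}\theta$ is $-K(\theta)=\Lie_K(D_iK^i)$. The acceleration $\kappa$ of $\bar{\ell}_o$ is read off from $\nabla_{e_+}e_+$ at $\rho=1$, with the correction from $\partial_u$ of the $e_-$ component: $\kappa=-(A+\Gamma^{-1}K(\Gamma))$. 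So $\kappa\theta=(A+\Lie_K\ln\Gamma)D_iK^i$. Rearranging the signs yields the stated identity.

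The main obstacle is this careful evaluation of $\kappa$ and $\partial_u\theta$. The danger is that extra terms appear from $\partial_u$ hitting $\tilde K$ through $D_if$. I would handle this with Lemma~\ref{lm:semi-basic} and evaluate only at $u=0$, where $D_if=0$.

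For the second identity, assume $A=\const$ and $\Lie_Kh=0$. Proposition~\ref{prop:2} then makes ${\mathcal H}_s$ totally geodesic, and \eqref{eq:non-expanding-cond} gives ${\mathfrak f}^{(\bar{\ell}_o)}=\bar{\ell}_o\lrcorner\Omega$. Lemma~\ref{lm:omega-1} computes the right-hand side as $\frac12\rho_s\Lie_K Z$, with the sign fixed by $w=-\frac12 Z$. The left-hand side is the pull-back of $R_{\mu\nu}\ell_o^\nu$ to the cross section. On ${\mathcal H}_s$ we have $\ell_o=\rho_s^{-1}\cdot$(normalization)$\,e_+$, with $\ell_o=\rho^2\Gamma^{-1}$-rescaled $e_+$ fixed by comparing coefficients of $\partial_u$. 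Its contraction with $e_i$ gives $\beta_i$ after choosing $s=1$, $u=0$.

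Matching the powers of $\rho_s$ then gives $\Lie_KZ=2\beta$. The only delicate point is the normalization factor between $\bar{\ell}_o$ and $e_+$ at $\rho=1$, and whether ${\mathcal H}_1$ passes through $u=0$, $\rho=1$. It does, since $\chi=1$ there.
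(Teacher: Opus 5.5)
Your proposal follows the paper's proof essentially step for step. For the first identity you write the Raychaudhuri equation on the null graph with $f\equiv 1$ at $u=0$, which gives $\partial_uf=-\frac12 A$, $P_{ij}=-D_{(i}K_{j)}$, $\ell_o=e_+$, $\kappa=-(A+\Lie_K\ln\Gamma)$ and Ricci term $\gamma$; for the second you use ${\mathfrak f}^{(\bar{\ell}_o)}=\bar{\ell}_o\lrcorner\Omega$ with Lemma~\ref{lm:omega-1} on ${\mathcal H}_1$ at $u=0$.

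Your normalization remark in the second part needs correcting. On ${\mathcal H}_s$ the function $f=\rho_s(u)$ is $x$-independent, so $\ell_o=e_+-\Gamma D^if\,e_i-\frac12\Gamma D_ifD^if\,e_-$ reduces to $\ell_o=e_+$ exactly. There is no $\rho_s^{-1}$ or $\Gamma^{-1}$ factor, and a $\Gamma^{-1}$ factor would in fact spoil the identity. The pulled-back left side at $u=0$ is therefore exactly $\beta$, and the right side is $\frac12\rho_1(0)\Lie_KZ=\frac12\Lie_KZ$.
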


Before proving this proposition we will show how Theorem \ref{thm:1} follows from these identities. 
Suppose that $D_iK^i=0$ then the first formula from Proposition \ref{prop:Ray} simplifies
\begin{equation}
\frac{1}{2}\LAMS(\Gamma A)=D_{(i}K_{j)}D^{(i}K^{j)}+\gamma.
\end{equation}
The left hand side is a total divergent. If null convergence condition is satisfied then $\gamma\geq 0$. The integral of total divergence is zero, but the right hand side is non-negative thus both terms on the right hand side need to vanish,
\begin{equation}
\gamma=0,\quad D_{(i}K_{j)}D^{(i}K^{j)}=0.
\end{equation}
The result is that $K$ is the Killing vector. Moreover, now identity simplifies even further, $\frac{1}{2}\LAMS(\Gamma A)=0$. By Lemma \ref{lm: principal}, $A=\const$. We can now use the second formula from Proposition \ref{prop:Ray}:
\begin{equation}
\Lie_KZ=2\beta=0,
\end{equation}
where we used a fact that $\beta=0$ due to vanishing $\gamma$. We prove $\Lie_K\Gamma=0$ by argument from Section \ref{sec:hyper}. The operator $\LAMS'(\psi)=-D_iD^i\psi+D^i(Z_i\psi)$ is invariant under vector field $K$ and $\Gamma$ is its principal eigenvector ($\LAMS'(\Gamma)=0$).  By Lemma \ref{lm: principal}, $\Lie_K\Gamma=0$.

\begin{proof}[Proof of Proposition \ref{prop:Ray}]
Let us consider an arbitrary NHG (without assumption $D_i K^i=0$). Our goal is to analyze Raychadhuri equation for a null hypersurface ${\mathcal N}$ given by $\rho=f(u,x)$. 
Let us remind that $\Psi_u^*\g=h$ and $\Psi_u^*(\Lie_{\bar{\ell}_o}\g)=2P$ where
\begin{equation}
P_{ij}=-D_{(i}\tilde{K}_{j)}.
\end{equation}
The Raychadhuri equation for non-affinely parametrized geodesics takes the form\footnote{The term $P_{ij}P^{ij}$ combines together contributions from both trace part (expansion) and trace-free part of $P_{ij}$ (the shear).}
\begin{equation}\label{eq: Raychadhuri-1}
\Psi_{u}^*(\Lie_{\bar{\ell}_o}\theta)=\kappa h^{ij}P_{ij}-P_{ij}P^{ij}-\Psi_{u}^*(R_{\mu\nu}\ell_o^\mu \ell_o^\nu),
\end{equation}
where $\ell_o$ is a an image under embedding of $\bar{\ell}_o$ into the spacetime and $\kappa$ is the accelaration of $\ell_o$ and $\theta=h^{ij}P_{ij}$ is the expansion. 

Let us remind that $h^{ij}P_{ij}={\mathbb L}(\Gamma f)$. Moreover,
\begin{equation}
\bar{\ell}_o=\frac{\partial}{\partial u}-\tilde{K}^i\frac{\partial}{\partial x^i}.
\end{equation}
This allows us to compute
\begin{equation}
\Psi_{u}^*(\Lie_{\bar{\ell}_o}\theta)=\left(\frac{\partial}{\partial u}-\tilde{K}^i\frac{\partial}{\partial x^i}\right){\mathbb L}(\Gamma f)={\mathbb L}\left(\Gamma \frac{\partial f}{\partial u}\right)-\Lie_{\tilde{K}}{\mathbb L}(\Gamma f).
\end{equation}
We can also compute a push-forward $\ell_o$ of the vector field $\bar{\ell}_o$
\begin{equation}
    \ell_o=\frac{\partial}{\partial u}-\tilde{K}^i\frac{\partial}{\partial x^i}+\left(\frac{\partial f}{\partial u}-\tilde{K}^i\frac{\partial f}{\partial x^i}\right)\frac{\partial}{\partial \rho}=e_+-\Gamma D^if e_i-\frac{1}{2}\Gamma D_if D^if e_-,
\end{equation}
where we used fact that $\rho=f$ on the null hypersurface.

Suppose now that at $u=u_o$ function $f$ is constant and equal $1$ then
\begin{equation}
\left.\frac{\partial f}{\partial u}\right|_{u=u_o}=-\frac{1}{2}A,\quad \tilde{K}|_{u=u_o}=K,\quad P_{ij}|_{u=u_o}=-D_{(i}K_{j)}.
\end{equation}
Moreover, the $\ell_o$ vector field simplifies,
\begin{equation}
\ell_o|_{u=u_o}=e_+.
\end{equation}
We can now compute acceleration $\kappa$. It is defined by the formula $\ell_o^\mu\nabla_\mu \ell_o^\nu=\kappa \ell_o^\nu$ on the hypersurface ${\mathcal N}$. 
Firstly, let us notice that $k_\mu \ell_o^\mu=1$ thus
\begin{equation}
\kappa=k_\nu \ell_o^\mu\nabla_\mu \ell_o^\nu=\ell_o^\mu\nabla_\mu (k_\nu\ell_o^\nu)-{\ell_o}_\nu \ell_o^\mu\nabla_\mu k^\nu=-{\ell_o}_\nu \ell_o^\mu\nabla_\mu k^\nu.
\end{equation}
Moreover, $k=\Gamma^{-1}e_-$ so $\nabla_\mu k^\nu=\Gamma^{-1}\nabla_\mu {e_-}^\nu-\Gamma^{-2}(\nabla_\mu \Gamma) {e_-}^\nu$. This allows us to write
\begin{equation}
\kappa=-\Gamma^{-1}g\left(\ell_o,\nabla_{\ell_o}e_-\right)+\Gamma^{-1}\nabla_{\ell_o}\Gamma.
\end{equation}
We use the formulas for covariant derivatives of $e_-$ and the form of the vector $e_+$ to compute at $\rho=1$
\begin{equation}
\nabla_{e_+}\Gamma=-\Lie_K\Gamma,\quad g\left(e_+,\nabla_{e_+}e_-\right)=\Gamma A.
\end{equation}
At $u=u_o$ we use $\ell_o|_{u=u_o}=e_+$ to obtain
\begin{equation}
\kappa|_{u=u_o}=-\left(A+\Lie_K\ln \Gamma\right).
\end{equation}
Additionally, using the value of $\ell_o$ vector field, $\Psi_{u_o}^*(R_{\mu\nu}\ell_o^\mu \ell_o^\nu)=\gamma$.

We can now write our identity evaluating Raychadhuri equation at $u=u_o$
\begin{equation}
-\frac{1}{2}\LAMS(\Gamma A)+\Lie_K(D_iK^i)=\left(A+\Lie_K\ln \Gamma\right) D_iK^i-D_{(i}K_{j)}D^{(i}K^{j)}-\gamma.
\end{equation}
This shows first part of Proposition \ref{prop:Ray}.

By Lemma \ref{lm:omega-1}, $\frac{1}{2}\Lie_KZ$ is equal to $\bar{\ell}_o\lrcorner\Omega$ restricted to a cross section of constant $u=u_o$. However, on the totally geodesic null hypersurface $\bar{\ell}_o\lrcorner\Omega$ is equal to restriction of $R_{\mu\nu}\ell_o^\nu\rd x^\mu$. Restricting it further to the cross section of constant $u=u_o$ we obtain exactly the form $\beta$. This finishes the proof of the second part. 
\end{proof}

\section{Symmetry inheritence}\label{sec:inheritence}

We will now prove Proposition \ref{prop:enhancement}. It shows that enhanced  symmetries of NHG are inherited by eletromagnetic field.

Solution of $\Lambda$-Einstein-Maxwell theory consists of a metric $g$ and a closed two-form $F$ (Maxwell field). The field $F$ with components $F_{\mu\nu}$ satisfies Maxwell equations 
\begin{equation}
\nabla^\mu F_{\mu\nu}=0,\quad \rd F=0.    
\end{equation}
The Einstein field equations for this matter content are given by
\begin{equation}
G_{\mu\nu}+\Lambda g_{\mu\nu}=\varkappa T_{\mu\nu},\quad T_{\mu\nu} = F_{\mu\alpha} F_{\nu}^{\phantom{\mu}\alpha} - \frac{1}{4} g_{\mu\nu} F_{\alpha\beta} F^{\alpha\beta},
\end{equation}
where $G_{\mu\nu}=R_{\mu\nu}-\frac{1}{2}R^\alpha_{\phantom{\alpha}\alpha} g_{\mu\nu}$ is the Einstein tensor and $\varkappa>0$ is a constant. The cosmological constant $\Lambda$ is arbitrary but fixed.

Let us remind that every solution to $\Lambda$-Einstein-Maxwell theory satisfies null convergence condition. Indeed, if $v$ is a null vector then Einstein's field equations show
\begin{equation}
    R_{\mu\nu}v^\mu v^\nu=\varkappa F_{\mu\alpha} F_{\nu}^{\phantom{\mu}\alpha}v^\mu v^\nu=\varkappa m^\mu m_\mu,
\end{equation}
where $m_\nu=F_{\mu\nu}v^\mu$. The vector $m$ is orthogonal to $v$ because $F$ is a two-form. Every vector orthogonal to a null vector is either null or spacelike. In particular, $m_\mu m^\mu\geq 0$. This shows that NCC condition is satisfied (the theory satisfies null energy condition).

We start  our analysis with an important property of non-expanding horizons in the solutions to $\Lambda$-Einstein-Maxwell theory.

\begin{lm}\label{lm:Maxwell}
Let $M$ be a solution to $\Lambda$-Einstein-Maxwell theory. Suppose that ${\mathcal N}$ is a non-expanding horizon in $M$ with a non-vanishing null vector field $\bar{\ell}$ (and a corresponding push-forwarded normal vector $\ell$). Then:
\begin{enumerate}
\item There exists a smooth function $c\in C^\infty({\mathcal N})$ such that $F_{\mu\nu}\ell^\nu\hat{=}c\ell_\nu$. Moreover, $\Lie_{\bar{\ell}}c=0$.
\item The pull back $\bar{F}$ of $F$ to ${\mathcal N}$ satisfies $\bar{\ell}\lrcorner \bar{F}=0$ (it is semi-basic on ${\mathcal N}$) and $\Lie_{\bar{\ell}}\bar{F}=0$.
\end{enumerate}
\end{lm}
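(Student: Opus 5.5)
First, use the Einstein equations along the horizon. A non-expanding horizon is totally geodesic, so Raychaudhuri gives $R_{\mu\nu}\ell^\mu\ell^\nu\hat{=}0$. Since $g_{\mu\nu}\ell^\mu\ell^\nu=0$, the Einstein equations then give
\begin{equation}
0\hat{=}\varkappa F_{\mu\alpha}F_\nu{}^\alpha\ell^\mu\ell^\nu=\varkappa m_\alpha m^\alpha,\qquad m_\alpha:=F_{\mu\alpha}\ell^\mu .
\end{equation}
The vector $m$ is orthogonal to the null vector $\ell$ and has zero norm, so $m$ is proportional to $\ell$. This gives a function $c$ with $F_{\mu\nu}\ell^\nu\hat{=}c\ell_\mu$ (up to the sign convention in the statement). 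Smoothness of $c$ follows because $\ell$ is non-vanishing: contract with a transversal smooth vector such as $k$.

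Next, pull back to ${\mathcal N}$. For $X$ tangent to ${\mathcal N}$ we have $\ell_\mu X^\mu=0$, so $\bar{\ell}\lrcorner\bar F=0$. Since $\rd F=0$, also $\rd\bar F=0$, and Cartan's formula gives
\begin{equation}
\Lie_{\bar\ell}\bar F=\rd(\bar\ell\lrcorner\bar F)+\bar\ell\lrcorner\rd\bar F=0 .
\end{equation}
This proves point 2.

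The remaining and main step is $\Lie_{\bar\ell}c=0$, which needs the other Maxwell equation $\nabla^\mu F_{\mu\nu}=0$ together with the non-expanding horizon condition. I will use the non-expanding condition in the form $R_{\mu\nu}\ell^\nu X^\mu\hat{=}0$ for all tangent $X$ (vanishing of $\mathfrak f^{(\bar\ell)}$). Write $\ell_\mu=\rd$-normal and compute $\ell^\nu\nabla^\mu F_{\mu\nu}=0$ in a null frame $(\ell,n,e_A)$ adapted to ${\mathcal N}$, with $n$ transversal null and $n\cdot\ell=-1$. The component $F(n,\ell)$ equals $\pm c$. Expanding the divergence, the terms with derivatives along ${\mathcal N}$ give $\Lie_{\bar\ell}c$ plus screen divergence terms of $\bar F_{AB}$. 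The terms involving the transversal derivative $\nabla_n$ should cancel because they appear contracted with $\ell^\nu$ in $F_{n\nu}$ structure. The connection terms involve expansion and shear, which vanish, and the rotation form $\omega$.

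To make this clean, I will instead use the Einstein equations: $\mathfrak f^{(\bar\ell)}=0$ gives $T_{\mu\nu}\ell^\nu X^\mu\hat{=}0$ for tangent $X$. Then $T(\ell,X)=F_{\mu\alpha}X^\mu F_\nu{}^\alpha\ell^\nu=c\,F(X,\ell)\propto c\,\ell\cdot X$, which vanishes automatically. So this condition gives no information and does not help. Hence the honest route is the Maxwell divergence computed on the totally geodesic ${\mathcal N}$.

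Using $\bar\nabla$ (the induced connection with $\bar\nabla\bar\ell=\omega\otimes\bar\ell$), I expect the identity $\ell^\nu\nabla^\mu F_{\mu\nu}\hat{=}\pm\Lie_{\bar\ell}c+(\text{terms}\ \propto\theta,\sigma)$. The $\omega$-terms should cancel between the derivative of $F(n,\ell)$ and the connection coefficients $\omega(\ell)$, because $c$ is defined with a weight that matches how $\ell$ scales. The bookkeeping of this cancellation is the main obstacle, and I will check it against the rescaling $\ell\to f\ell$ for consistency: $c$ is invariant under rescaling, so $\Lie_{\bar\ell}c=0$ is scale covariant. Since $\theta=\sigma=0$ on ${\mathcal N}$, the identity yields $\Lie_{\bar\ell}c=0$.
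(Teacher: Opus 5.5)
Your proof of the existence of $c$ and of point 2 matches the paper's: $m$ is null and orthogonal to $\ell$, hence $m\propto\ell$, and then Cartan's formula with $\rd\bar F=0$. You are also right that $\mathfrak f^{(\bar\ell)}=0$ gives no information here. The gap is in the only non-trivial claim, $\Lie_{\bar\ell}c=0$. You never compute $\ell^\nu\nabla^\mu F_{\mu\nu}$; you only state what you expect it to be, and you flag the bookkeeping as the main obstacle. The expectation is also partly wrong. You predict ``screen divergence terms of $\bar F_{AB}$'', but such terms would not be removed by $\theta=\sigma=0$, so your final sentence would not follow if they were present. Your stated reason for dropping the transversal derivatives is not the actual mechanism either.

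The step closes quickly along your route. Take $g^{\mu\nu}=-\ell^\mu n^\nu-n^\mu\ell^\nu+\sum_A e_A^\mu e_A^\nu$ with $\ell\cdot n=-1$ and $e_A$ tangent to ${\mathcal N}$. Then
\[
\ell^\nu\nabla^\mu F_{\mu\nu}=-(\nabla_\ell F)(n,\ell)-(\nabla_n F)(\ell,\ell)+\sum_A(\nabla_{e_A}F)(e_A,\ell).
\]
\begin{itemize}
\item The middle term vanishes by antisymmetry. So no transversal derivative enters, and $\ell$ need not be extended off ${\mathcal N}$.
\item First term: use $F(X,\ell)=c\,\ell\cdot X$ (so $F(n,\ell)=-c$), $\nabla_\ell\ell=\kappa\ell$ and $\ell\cdot\nabla_\ell n=\kappa$. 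It equals $\Lie_{\bar\ell}c+c\kappa-c\kappa=\Lie_{\bar\ell}c$; this is the $\omega(\ell)$ cancellation you anticipated.
\item Last term: $F(e_A,\ell)\hat=0$. The terms $-F(\nabla_{e_A}e_A,\ell)=c\,g(e_A,\nabla_{e_A}\ell)$ sum to $c\,\theta$. The terms $F(e_A,\nabla_{e_A}\ell)=\sum_B B_{AB}F(e_A,e_B)$ sum to zero, since $B$ is symmetric and $F$ antisymmetric; the shear never appears.
\end{itemize}
Hence $\ell^\nu\nabla^\mu F_{\mu\nu}\hat=\Lie_{\bar\ell}c+c\,\theta$, and Maxwell's equation with $\theta=0$ gives the claim.

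Completed this way, your route differs from the paper's. The paper extends $\ell$ to a null field $\tilde\ell$ and $c$ to $\tilde c$ off ${\mathcal N}$. It writes $F_{\mu\nu}\tilde\ell^\nu-\tilde c\tilde\ell_\mu=\xi s_\mu$ with $s_\mu\tilde\ell^\mu\hat=0$, so the divergence of this form vanishes on ${\mathcal N}$. It then evaluates that divergence using Maxwell's equation and the totally geodesic decomposition $\nabla_\mu\tilde\ell_\nu\hat=u_\mu\tilde\ell_\nu+\tilde\ell_\mu v_\nu$. The frame computation needs no extensions and gives the slightly sharper identity $\Lie_{\bar\ell}c=-c\,\theta$ on any null hypersurface with $F_{\mu\nu}\ell^\nu\propto\ell_\mu$. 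The paper's argument is frame-free, at the cost of the extension construction.
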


\begin{proof}
Let us notice that for null vectors
\begin{equation}
T_{\mu\nu} \ell^\mu \ell^\nu=m_\mu m^\mu,\quad m_\mu=F_{\mu\nu}\ell^\nu.
\end{equation}
On the non-expanding horizon $T_{\mu\nu} \ell^\mu \ell^\nu\hat{=}0$ thus $m_\mu$ is null. Moreover, $m_\mu \ell^\mu=0$ and the only option for a null vector being orthogonal to $\ell$ is that $m$ is proportional to it. There exists a smooth function $c$ on ${\mathcal N}$ such that $F_{\mu\nu}\ell^\nu\hat{=}c\ell_\mu$. 

We extend $c$ to a function $\tilde{c}$ on $M$ and $\ell$ to a vector field $\tilde{\ell}$ on $M$ such that $\tilde{\ell}^\mu\tilde{\ell}_\mu=0$. This is always possible. We also introduce a function $\xi$ such that $\xi=0$ on ${\mathcal N}$ but $\rd \xi\not=0$ on ${\mathcal N}$. We notice that, the gradient of this function is proportional to $\ell$ on ${\mathcal N}$. Let us consider
\begin{equation}
f_\mu=F_{\mu\nu}\tilde{\ell}^\nu-\tilde{c}\tilde{\ell}_\mu.
\end{equation}
It is perpendicular to $\tilde{\ell}$, namely
\begin{equation}\label{eq:f-ell}
f_\mu \tilde{\ell}^\mu=F_{\mu\nu}\tilde{\ell}^\nu\tilde{\ell}^\mu-\tilde{c}\tilde{\ell}_\mu\tilde{\ell}^\mu=0.
\end{equation}
As $f_\mu$ vanishes at the non-expanding horizon, it can be written around ${\mathcal N}$ in a form $f_\mu=\xi s_\mu$ for some smooth form $s_\mu$. From \eqref{eq:f-ell}, $\xi s_\mu \tilde{\ell}^\mu=0$ and by continuity $s_\mu \tilde{\ell}^\mu\hat{=}0$. In particular $s_\mu\nabla^\mu\xi \hat{=}0$.

Moreover,
\begin{equation}
\nabla^\mu f_\mu=(\nabla^\mu\xi)s_\mu+\xi\nabla^\mu s_\mu\hat{=}0.
\end{equation}
We can compute this divergence also using fact that $\nabla^\mu F_{\mu\nu}=0$,
\begin{equation}\label{eq:div-f}
\nabla^\mu f_\mu=(\nabla^\mu F_{\mu\nu})\tilde{\ell}^\nu+F_{\mu\nu}\nabla^\mu\tilde{\ell}^\nu-\tilde{\ell}^\nu \nabla_\nu \tilde{c}-\tilde{c}\nabla^\nu\tilde{\ell}_\nu=F_{\mu\nu}\nabla^\mu\tilde{\ell}^\nu-\tilde{\ell}^\nu \nabla_\nu \tilde{c}-\tilde{c}\nabla^\nu\tilde{\ell}_\nu.
\end{equation}
On the non-expanding horizon pull-back of $\nabla_\mu\tilde{\ell}_\nu$ to the horizon vanishes, thus we can write
\begin{equation}
\nabla_\mu\tilde{\ell}_\nu\hat{=}u_\mu \tilde{\ell}_\nu+\tilde{\ell}_\mu v_\nu,
\end{equation}
for some vector fields $u$ and $v$. Moreover, as $\tilde{\ell}^\nu\tilde{\ell}_\nu=0$,
\begin{equation}
\tilde{\ell}^\nu\nabla_\mu\tilde{\ell}_\nu=0\Longrightarrow v_\nu\tilde{\ell}^\nu\hat{=}0.
\end{equation}
Using this fact we compute
\begin{equation}
F_{\mu\nu}\nabla^\mu\tilde{\ell}^\nu\hat{=}F_{\mu\nu}(\tilde{\ell}^\mu v^\nu+u^\mu\tilde{\ell}^\nu)=cu^\nu\tilde{\ell}_\nu,\quad \nabla^\mu\tilde{\ell}_\mu\hat{=}u^\mu\tilde{\ell}_\mu+v^\mu\tilde{\ell}_\mu\hat{=}u^\mu\tilde{\ell}_\mu.
\end{equation}
The formula \eqref{eq:div-f} for $\nabla^\mu f_\mu$ simplifies on ${\mathcal N}$
\begin{equation}
\nabla^\mu f_\mu\hat{=} -\ell^\nu \nabla_\nu \tilde{c}.
\end{equation}
As it vanishes on ${\mathcal N}$, we obtain $\Lie_{\bar{\ell}}c=0$.

Pull-back of equality $F_{\mu\nu}\ell^\mu\hat{=}c\ell_\nu$ to ${\mathcal N}$ shows
\begin{equation}
    \bar{\ell}\lrcorner \bar{F}=0.
\end{equation}
The second point of the lemma follows from the Cartan formula
\begin{equation}
    \Lie_{\bar{\ell}}\bar{F}=\bar{\ell}\lrcorner \rd \bar{F}+\rd\left(\bar{\ell}\lrcorner \bar{F}\right)=0.
\end{equation}
We used fact that  $\rd\bar{F}=0$ (due to functoriality of exterior derivative $\rd$).
\end{proof}

Let us now consider an NHG spacetime $M$ that is a solution to Einstein-Maxwell equations with cosmological constant $\Lambda$. We choose coordinate system such that $D_iK^i=0$. 
Because the Einstein-Maxwell theory satisfies the null energy condition, $M$ satisfies the null convergence condition. Thus, we have two foliations by non-expanding horizons ${\mathcal H}_s$ with normal vectors $e_+$ and ${\mathcal H}_s'$ with normal vectors $e_-$.

The most general form of the Maxwell field is
\begin{equation}
F=a\ef^+\wedge \ef^-+ a_i^+\ef^i\wedge \ef^++a_i^-\ef^i\wedge \ef^-+\frac{1}{2}f_{ij}\ef^i\wedge \ef^j.
\end{equation}
The coefficients $a$, $a_i^\pm$ and $f_{ij}$ are functions on $M$.

Consider now a non-expanding horizon ${\mathcal H}_s'$. We have
\begin{equation}
    e_-\lrcorner F=-\left(a\ef^++a_i^-\ef^i\right),\quad e_-\lrcorner g=\Gamma \ef^+,
\end{equation}
and the pull-back $\tilde{f}'$  of $F$ to ${\mathcal H}_s'$ is equal to
\begin{equation}
 \tilde{f}'=a_i^-\rd x^i\wedge \rd \rho+\frac{1}{2}f_{ij}\rd x^i\wedge \rd x^j.  
\end{equation}
We parametrize ${\mathcal H}_s'$ by coordinates $\rho,x ^i$ and then the null vector field is $\frac{\partial}{\partial \rho}$. Lemma
\ref{lm:Maxwell} shows that
\begin{equation}
    a_i^-=0,\quad \Lie_{\frac{\partial}{\partial \rho}}\left(\Gamma^{-1} a\right)=0,\quad \Lie_{\frac{\partial}{\partial \rho}}\tilde{f}'=0.
\end{equation}
The last two conditions simply mean that $a$ and $f_{ij}$ are $\rho$-independent. These conditions hold in the whole spacetime because $s$ is arbitrary.

We will now consider a non-expanding horizon ${\mathcal H}_s$. We compute
\begin{equation}
e_+\lrcorner F=a\ef^--a_i^+\ef^i,\quad e_+\lrcorner g=\Gamma \ef^-.
\end{equation}
Similarly, as in the case of ${\mathcal H}_s'$ non-expanding horizons, we see that $a_i^+=0$ (Lemma \ref{lm:Maxwell}). Moreover, as pull-back of $\ef^i$ to ${\mathcal H}_s$ is equal to $\bar{e}^i$, the pull-back of $F$ to ${\mathcal H}_s$ is equal to $\tilde{f}=\frac{1}{2}f_{ij}\bar{e}^i\wedge \bar{e}^j$. The null vector field on ${\mathcal H}_s$ is given by $\bar{\ell}_o=\frac{\partial}{\partial u}-\rho(u)K^i\frac{\partial}{\partial x^i}$.
Lemma \ref{lm:Maxwell} shows
\begin{equation}
    a^+_i=0,\quad \Lie_{\bar{\ell}_o}\left(\Gamma^{-1} a\right)=0,\quad \Lie_{\bar{\ell}_o}\tilde{f}=0.
\end{equation}
Again this holds on every leaf of foliation.

We can now apply Lemma \ref{lm:semi-basic} to conclude 
\begin{equation}
    \Gamma^{-1} \frac{\partial a}{\partial u}-\rho(u)\Lie_K\left(\Gamma^{-1} a\right)=0,\quad \frac{\partial \bar{f}}{\partial u}-\rho(u)\Lie_K\bar{f}=0.
\end{equation}
where we regard $a$ as $u$-dependent function on $S$ and the form $\bar{f}=\frac{1}{2}f_{ij}\rd x^i\wedge \rd x^j$ is a $u$-dependent form on $S$.
These formulas holds for arbitrary $s$ and as $\bar{f}$ and $a$ are $\rho$-indepedent we obtain very restrictive conditions. Consider first case $s=\pm \infty$, then $\rho(u)=0$. This shows that $a$ and $\bar{f}$ are $u$-independent
\begin{equation}
    \frac{\partial a}{\partial u}=0,\quad \frac{\partial \bar{f}}{\partial u}=0.
\end{equation}
Consider now an arbitrary other $s$ then using fact that $\Gamma$ is Lie dragged by $K$ and $\rho(u)\not=0$
\begin{equation}\label{eq:K-inv}
    \Lie_Ka=0,\quad \Lie_K\bar{f}=0.
\end{equation}
In order to analyze invariance of $F$ under four symmetries of NHG spacetime we write it in the holonomic basis using $a_i^\pm=0$,
\begin{equation}
    F=a\rd u\wedge \rd \rho+\frac{1}{2}f_{ij}\rd x^i\wedge \rd x^j+\rho f_{ij}K^i\rd u\wedge \rd x^j.
\end{equation}
The invariance under boost and shift follows from $u$ and $\rho$-independence of $a$ and $f_{ij}$. The invariance under vector field $K^i\frac{\partial}{\partial x^i}$ follows from \eqref{eq:K-inv}. In order to show invariance under $Y$
we need an additional identity. We compute
\begin{equation}
    0=\rd F=\rd a\wedge \rd u\wedge \rd \rho+\rd \bar{f}-f_{ij}K^i\rd x^j\wedge \rd u\wedge \rd \rho-\rho \rd\left(f_{ij}K^i\rd x^j\right)\wedge \rd u,
\end{equation}
where abusing notation $\rd a$ (respectively $\rd \bar{f}$) is a pull-back of $\rd a$ (respectively $\rd f$) from $S$ by a natural projection along $u$ and $\rho$ variables.
At $\rho=0$ we obtain two identities on $S$
\begin{equation}
    \rd a=K\lrcorner \bar{f},\quad \rd \bar{f}=0.
\end{equation}
we can now compute $Y\lrcorner F$
\begin{equation}
    Y\lrcorner F=\rd \left(\frac{1}{2}A\rho u^2 a-au\right).
\end{equation}
This show invariance, as $\Lie_YF=Y\lrcorner \rd F+\rd\left(Y\lrcorner F\right)=0$.

We showed that the NHG solutions to Einstein-Maxwell equations with cosmological constant $\Lambda$ are invariant under $U,B,Y$ and $K^i\frac{\partial}{\partial x^i}$ vector fields. This finishes proof of Proposition \ref{prop:enhancement}.

\section{Summary and outlook}

We showed that intrinsic rigidity can be understood in terms of properties of foliations by non-expanding horizons of the NHG spacetime. This geometric interpretation also simplifies the original proof and provides a method of showing inheritance of the additional symmetry by the fields. The identity of \cite{Dunajski:2023IntrinsicRigidity, Colling:2025Symmetries} can be obtained by considering the Raychaudhuri equation. This identity is quadratic in $f$ and can thus presumably also be obtained in any spacetime containing an extremal Killing horizon by considering the second variation of the expansion with respect to transversal deformations. One can ask if higher variations provide any valuable insight into the geometry of extremal horizons, especially regarding higher-order deformations \cite{Li:2015wsa, Li:2013gca, Kolanowski:2019wua, Katona:2023vtq, Katona:2024oah}.

In this work, we considered only extremal Killing horizons with trivial topology. It is an interesting question whether intrinsic rigidity also holds for extremal Killing horizons that lack a trivial bundle structure. Non-extremal Killing horizons of non-trivial topology appear as Cauchy horizons in Taub-NUT spacetimes. The role of the extremal case is unclear; nonetheless, these objects are interesting in their own right. Axisymmetric electro-vacuum solutions in four spacetime dimensions are classified \cite{Buk:2025eat, PhysRevD.105.064016}, and no non-axisymmetric solution is known. When the horizon has a non-trivial bundle structure, the constraint equation is exactly the same as in the trivial bundle case \cite{Buk:2025eat}; however, the interpretation of the null convergence condition in terms of horizon data is unknown. Thus, extending our argument to this case requires additional work. The situation is dramatically different if the horizon does not carry a bundle structure with the null vector field generating the fibers. In the generic case, there exists an additional symmetry vector field, but its relation to the Dunajski-Lucietti vector field is unknown.

We expect that the method of proving symmetry inheritance of matter fields can be extended from Maxwell fields to the matter fields considered in \cite{Colling:2025Symmetries}. These matter theories (Yang-Mills fields, $p$-forms with field equations containing specific topological terms) behave nicely at an extremal Killing horizons. We conjecture that for these particular matter contents, there exists an extension of Proposition \ref{prop:enhancement}, although the precise formulation of invariance, especially in the context of Yang-Mills fields, is currently unclear.

We leave these questions for future research.

\subsection*{Declarations}

The work was funded by the grant \emph{Horyzonty i promieniowanie grawitacyjne} No. 2021/43/B/ST2/02950 of the Polish Science Foundation (NCN). The authors declare no conflict of interest. The work is purely theoretical and no shearable data was produced.


\end{document}